\newtheorem{theorem}{Theorem}[section]
\newtheorem{lemma}[theorem]{Lemma}
\newtheorem{corollary}[theorem]{Corollary}
\newtheorem{defn}{Definition}[section]
\newcommand{\A}{\mathcal{A}}
\newcommand{\B}{\mathcal{B}}
\newcommand{\C}{\mathcal{C}}
\newcommand{\G}{\mathcal{G}}
\newcommand{\E}{\mathcal{E}}
\newcommand{\M}{\mathcal{M}}
\renewcommand{\P}{\mathcal{P}}
\newcommand{\Q}{\mathcal{Q}}
\newcommand{\T}{\mathcal{T}}
\newcommand{\U}{\mathcal{U}}
\newcommand{\V}{\mathcal{V}}
\newcommand{\X}{\mathcal{X}}
\newcommand{\Z}{\mathcal{Z}}
\renewcommand{\Lambda}{M}
\newcommand{\EE}{\mathbf{E}}
\newcommand{\NN}{\mathbb{N}}
\newcommand{\RR}{\mathbb{R}}
\newcommand{\Lap}{\textrm{Lap}}
\newcommand{\SP}{\textrm{SP}}
\newcommand{\polylog}{\textrm{polylog}}
\newcommand{\hd}{h} 
\newcommand{\hl}{\ell} 
\newcommand{\binset}{\{0,1\}} 
\newcommand{\wone} {\|w\|_1}
\newcounter{algorithm}
\newenvironment{algorithm}[1][]{
\refstepcounter{algorithm}
$$
$$
\begin{mdframed}
\begin{center}
{\bf Algorithm \thealgorithm: #1}
\end{center}
\newcommand{\Inputs}[1]{{\bf Inputs}: ##1}

}
{
\end{mdframed}
}
\title{Shortest Paths and Distances with Differential Privacy}
\author{
Adam Sealfon \\
{\normalsize MIT} \\
{\normalsize asealfon@csail.mit.edu}
}
\date{}
\begin{document}
\pagenumbering{gobble}
\clearpage
\maketitle

\begin{abstract}
We introduce a model for differentially private analysis of weighted graphs in which the graph topology $(\V,\E)$ is
assumed to be public and the private information consists only of the edge weights $w:\E\to\mathbb{R}^+$.
This can express hiding congestion patterns in a known system of roads.
Differential privacy requires that the output of an algorithm provides little advantage,
measured by privacy parameters $\epsilon$ and $\delta$, for distinguishing between neighboring
inputs, which are thought of as inputs that differ on the contribution of one individual.
In our model,
two weight functions $w,w'$ are considered to be neighboring if 
they have $\ell_1$ distance at most one.

We study the problems of privately releasing a short path between a pair of vertices
and of privately releasing approximate distances between all pairs of vertices.
We are concerned with the {\em approximation error}, the difference between
the length of the released path or released distance and
the length of the shortest path or actual distance.

For the problem of privately releasing a short path between 
a pair of vertices,
we prove a lower bound of $\Omega(|\V|)$ on the additive approximation error 
for fixed privacy parameters $\epsilon,\delta$.
We provide a differentially private algorithm that matches this error bound up to a logarithmic factor
and releases paths between all pairs of vertices, not just a single pair.
The approximation error achieved by our algorithm can be bounded by the
number of edges on the shortest path, so we achieve better accuracy than the worst-case bound
for pairs of vertices that are connected by a low-weight path consisting of $o(|\V|)$ vertices.

For the problem of privately releasing all-pairs distances, 
we show that for trees we can release all-pairs distances with approximation error $O(\log^{2.5}|\V|)$
for fixed privacy parameters.
For arbitrary bounded-weight graphs with edge weights in $[0,M]$ we can
release all distances with approximation error $\tilde{O}(\sqrt{|\V|M})$.
\end{abstract}

\newpage
\tableofcontents
\newpage
\pagenumbering{arabic}

\vspace{0.3in}
\section{Introduction}

\subsection{Differential privacy and our model}
Privacy-preserving data analysis is concerned with releasing useful aggregate information from a 
database while protecting sensitive information about individuals in the database.
\emph{Differential privacy} \cite{DMNS06} provides strong privacy guarantees while allowing
many types of queries to be answered approximately.
Differential privacy requires that for any pair of \textit{neighboring} databases $x$ and $y$, 
the output distributions of the
mechanism on database $x$ and database $y$ are very close. 
In the traditional setting, 
databases are collections of data records and are considered to be
neighboring if they are identical except for a single record, which is thought of as
the information associated with a single individual.
This definition guarantees that
an adversary can learn very little about any individual in the database,
no matter what auxiliary information the adversary may possess.

\paragraph{A new model}
We introduce a model of differential privacy for the setting in which databases are weighted graphs $(\G,w)$.
In our setting, the graph topology $\G=(\V,\E)$ is assumed to be public, and only the edge weights 
$w:\E\to\RR^+$ must be kept private. Individuals are assumed to have only bounded influence on the edge weights.
Consequently, two weight functions 
on the same unweighted graph 
are considered to be neighbors if they have $\ell_1$ distance at most one.

The model is well suited to capture the setting of a navigation system
which has access to current traffic data and uses it to direct
drivers. The travel times along routes provided by the system should be as short as possible.  
However, the traffic data used by the system may consist of private information. 
For instance, navigation tools such as Google Maps and Waze estimate traffic conditions based on GPS locations
of cars or based on traffic reports by users. 
On the other hand, the topology of the road map used by the system 
is non-private, since it is a static graph readily available to all users.
We would like to provide information about paths and distances in the network without revealing
sensitive information about the edge weights.
In this paper we introduce a variant of differential privacy suitable for studying network routing,
and provide both algorithms and lower bounds.

\subsection{Our results}
We will consider two classic problems in this model which are relevant for routing.
First, we are interested in finding \emph{short paths} between specified pairs of vertices. 
We cannot hope always to release the shortest path while preserving privacy, 
but we would like to release a path that is not much longer than the shortest path.
Here the \emph{approximation error}\hspace{-0.05px} is the difference in length between the released path and the shortest path.
Second, we would like to release \emph{distances} between
pairs of vertices.  The \emph{approximation error} here is the absolute difference between the released distance
and the actual distance. As we discuss below, releasing an accurate distance estimate for a single pair of
vertices in our model is a straightforward application of the Laplace mechanism of \cite{DMNS06}. 
We focus on the more difficult problem of releasing \emph{all-pairs distances} privately with low error.

Interestingly, all of our error bounds are independent of the sum of the edge weights $\|w\|_1$, which corresponds
most naturally to the size of the database under the usual setting for differential privacy. Instead, they depend only
on the number of vertices $|\V|$ and edges $|\E|$ in the graph and the privacy parameters $\epsilon$ and $\delta$. 
Our error bounds constitute additive error. Consequently, if the edge weights are large,  
the error will be small in comparison.


\paragraph{Approximate shortest paths}
We consider the problem of privately releasing an approximate \emph{shortest path}.
We provide a strong reconstruction based lower bound,
showing that in general it is not possible under differential privacy to release a short path between a pair of vertices 
with additive error better than $\Omega(|\V|)$. 
Our lower bound is obtained by reducing the problem of reconstructing
many of the rows of a database to the problem of finding a path with low error. 

We also show that a simple $\epsilon$-differentially private algorithm based on the Laplace mechanism of  \cite{DMNS06}
comes close to meeting this bound.
If the minimum-weight path between a pair of vertices $s,t$ consists of at most $k$ edges, 
then the weight of the path released by
our algorithm is greater by at most $O(k\log |\V|) / \epsilon$.
Since $k < |\V|$, the weight of the released path is $O(|\V|\log |\V|)/\epsilon$ greater than optimal. 
Note that when the edge weights are large, the length of a path can be much larger than $(|\V|\log|\V|)/\epsilon$,
in which case our algorithm provides a good approximation.
Moreover, for many networks we would expect that shortest paths should be compact and consist of few vertices,
in which case the accuracy obtained will be much greater.
The algorithm releases not only a single path but short
paths between every pair of vertices without loss of additional privacy or accuracy.


\paragraph{Approximate all pairs distances}
For the problem of privately releasing all-pairs \emph{distances}, standard techniques yield error
$O(|\V|\log|\V|)/\epsilon$ for each query under $\epsilon$-differential privacy. 
We obtain improved algorithms for two special classes of graphs, 
trees and arbitrary graphs with edges of bounded weight.

For trees we show that a simple recursive algorithm can release all-pairs distances with error $O(\log^{2.5} |\V|) / \epsilon$.
Implicitly, the algorithm finds a collection $\C$ of paths with two properties. Every edge is contained in at most $\log |\V|$ paths,
and the unique path between any pair of vertices can be formed from at most $4\log |\V|$ paths in $\C$ using
the operations of concatenation and subtraction. The first property implies that the query consisting of the lengths of 
all of the paths in $\C$ has 
global $\ell_1$ sensitivity $\log |\V|$, so we can release estimates of the lengths of all paths in $\C$
with error roughly $\log |\V|$ using the standard Laplace mechanism \cite{DMNS06}.  The second property implies that we can use these estimates to compute
the distance between any pair of vertices without too much increase in error.
We first reduce the approximate all-pairs distances problem to the problem of approximating all distances from a single fixed vertex. 
We then solve the single-source problem recursively, repeatedly decomposing the tree into subtrees of at most half the size. 

In the special case of the path graph, releasing approximate all-pairs distances is
equivalent to query release of threshold
functions on the domain $X=\E$. 
The results of \cite{DNPR10} yield the same error 
bound that we obtain for computing distances on the path graph. Consequently our algorithm for all-pairs distances 
on trees can be viewed as a generalization of
a result for private query release of threshold functions. 

For bounded-weight graphs we show that we can generate a set of vertices $\Z\subset\V$ such that
distances between all pairs of vertices in $\Z$ suffice to estimate distances between all pairs of vertices
in $\V$. The required property on $\Z$ is that every vertex $v\in \V$ is connected to a vertex $z_v\in \Z$ by a path
consisting of few vertices.
Since we can always find a set $\Z$ that is not too large, we can release distances between all pairs of 
vertices in $\Z$ with relatively small error. The distance between a pair of vertices $u,v\in\V$ can then be approximated
by the distance between nearby vertices $z_u, z_v\in \Z$. 
In general, if edge weights are in $[0,\Lambda]$ for $\frac{1}{|\V|} < \Lambda\epsilon < |\V|$, then 
we can release all pairs shortest paths with an additive
error of
$\tilde{O}(\sqrt{|\V|\Lambda \epsilon^{-1}\log(1/\delta)})$ per path under $(\epsilon,\delta)$-differential privacy for any 
$\epsilon,\delta>0$. 
Note that the distance between a pair of vertices can be as large as $|\V|\cdot \Lambda$, so this error bound is nontrivial.
For specific graphs we can obtain better bounds by finding a smaller set $\Z$ satisfying the necessary requirements.

\paragraph{Additional problems}
We also consider two other problems in this model, the problem of releasing a nearly \emph{minimal spanning
tree} and the problem of releasing an almost \emph{minimum weight perfect matching}. 
For these problems, the \emph{approximation error} is the absolute difference in weight between the released
spanning tree or matching and the minimum-weight spanning tree or matching. 

Using a similar argument to the shortest path results of Section \ref{sec:DPSP}, we provide lower bounds
and algorithms for these problems. 
Through a reduction from the problem of reconstructing many rows in a database, we show that 
it is not possible under differential privacy to release a spanning tree or matching with error
better than $\Omega(|\V|)$. Using a simple algorithm based on the Laplace mechanism of \cite{DMNS06},
we can privately release a spanning tree or matching with error roughly $(|\V|\log |\V|)/\epsilon$. 
These results are presented
in Appendix \ref{sec:MST-matching-more}.

\paragraph{Future directions}
In this work we describe differentially private algorithms and lower bounds for several natural graph problems
related to network routing. 
We would like to explore how well our algorithms perform in practice on actual road networks and traffic data.  
We would also like to develop new algorithms for additional graph problems and to design improved all-pairs
distance algorithms for additional classes of networks.
In addition, lower bounds for accuracy of private all-pairs distances would be very interesting.

\paragraph{Scaling}
Our model requires that we preserve the indistinguishability of two edge weight functions which differ 
by at most $1$ in $\ell_1$ norm. 
The constant $1$ here is arbitrary, and the error bounds in this paper scale according to it. 
For instance, if a single individual can only influence edge weights by $1/|\V|$ rather than $1$ in $\ell_1$ norm, 
then 
we can privately
find a path between any pair of vertices whose length is only $O(\log|\V|)/\epsilon$ longer
than optimal rather than $O(|\V|\log|\V|)/\epsilon$. The other results in this paper can be scaled similarly.


\subsection{Previous work on graphs and differential privacy}
\label{ssec:RelatedWork}
\paragraph{Why a new model?}
The two main models which have been considered for applying differential privacy to network structured
data are \emph{edge} and \emph{node} differential privacy \cite{HLMJ09, BBDS13, KNRS13, KRSY11}. 
Under edge differential privacy, two graphs are considered to be neighbors if one graph can be obtained
from the other by adding or removing a single edge.
With node differential privacy, two graphs are neighbors if they differ only on the set of edges incident
to a single vertex.

However, the notions of edge and node differential privacy are not well suited to shortest
path and distance queries. Consider an algorithm that releases a short path between a specified pair of
vertices. Any useful program solving
this problem must usually at least output a path of edges which are in the graph.
But doing so blatantly violates privacy under both the edge and node notions of differential privacy,
since the released edges are private information. Indeed, an algorithm
cannot release subgraphs of the input graph without violating both edge
and node differential privacy
because this distinguishes the input from a neighboring graph in which one of the 
released edges is removed.

What if we simply want to release the distance between a pair of vertices?
In general it is not possible to release
approximate distances with meaningful utility under edge or node differential privacy,
since changing a single edge can greatly change the distances in the graph.
Consider the unweighted path graph $\P$ and any pair of adjacent vertices $x,y$ on the path. 
Removing edge $(x,y)$ disconnects the graph, increasing the distance between $x$ and $y$
from $1$ to $\infty$. 
Even if the graph remains connected, the removal of an edge does not preserve approximate distances.
Consider any pair of adjacent vertices $x,y$ on the cycle graph $\C$. Here, removing edge
$(x,y)$ increases the distance between $x$ and $y$ from $1$, the smallest possible distance,
to $|\V|-1$, the largest possible distance. 
Hence, the edge and node notions of differential privacy do not enable the release of approximate
distances.
This inadequacy motivates the new notion of differential privacy for graphs introduced in this work.

\paragraph{A histogram formulation}
A database consisting of elements from a data universe $\U$ can be described by a vector  
$D\in \NN^{|\U|}$, where the $i$th component of the vector corresponds to the number of copies of the $i$th element
of $\U$ in the database. 
More generally, we can allow the database to be
a point in $\RR^{|\U|}$, which corresponds to allowing a fractional number of occurrences of each element.
Since an edge weight function $w$ is an element of $\RR^{|\E|}$ and the notions of sensitivity coincide, 
we can rephrase our model in the standard differential privacy framework.  

Consequently, we can use existing tools for privately answering low sensitivity queries to yield incomparable results to
those presented in this paper for the problem of releasing all-pairs distances with low error. 
If all edge weights are integers and the sum $\|w\|_1$ of the edge weights is known,
we can release all-pairs distances with additive error 
$$\tilde{O}( \sqrt{ \wone\cdot \log |\V|} \log^{1.5}(1/\delta)/ \epsilon)$$
except with probability $\delta$
using the $(\epsilon, \delta)$-differentially private boosting mechanism of \cite{DRV10}.
The assumption that $\|w\|_1$ is known is not problematic, since we can privately release a good approximation. 
We can also extend the algorithm to noninteger edge weights for the queries in our setting, 
although we obtain a worse error bound. We do this by modifying the base synopsis generator of \cite{DRV10} to produce a database in
$(\tau \NN)^{|\E|}$ whose fractional counts are integer multiples of $\tau = \alpha / (2 |\V|)$, where $\alpha$ will be
the obtained additive approximation. The modified algorithm releases all-pairs distances with error 
$\tilde{O}(\sqrt[3]{\wone\cdot |\V|} \log^{4/3}(1/\delta) / \epsilon^{2/3})$.  

This error bound has a better dependence on $|\V|$ than the algorithms for all-pairs distances in general and 
weighted graphs described in Section \ref{sec:DPSdists}. However, it has a substantial dependence on
$\wone$, while the error bound for all algorithms in the remainder of this paper are independent of $\wone$. 
In addition, the running time of the algorithm of \cite{DRV10} is exponential in the database size, while
all algorithms described below run in polynomial time. 

\paragraph{Additional related work}
While the private edge weight model explored in this work is new, 
a few previous works have considered problems on graphs in related models. 
Nissim, Raskhodnikova and Smith 
\cite{NRS07} consider the problem of computing the cost 
of the minimum spanning tree 
of a weighted graph. They introduce the notion of smooth sensitivity, which they use 
to compute the approximate MST cost. In their work, edge weights are bounded, and
each weight corresponds to the data of a single individual. 
In contrast, we allow unbounded weights but assume a bound on the effect an individual can
have on the weights.

Hsu et al.~\cite{HHR+14} consider the problem of privately finding high-weight matchings
in bipartite graphs.
In their model, which captures a private version of the allocation problem, 
two weightings of the complete bipartite graph are neighbors if they differ only
in the weights assigned to the edges incident to a single left-vertex, which correspond
to the preferences of a particular agent. 
They show that the problem is impossible to solve under the standard notion of differential privacy.
They work instead under the relaxed notion of ``joint differential privacy,'' in which 
knowledge of the edges of the matching which correspond 
to some of the left-vertices cannot reveal the weights 
of edges incident to any of the remaining left-vertices.

A series of works has explored the problem of privately computing the size of all cuts
 $(S,\overline{S})$ in a graph. Gupta, Roth and Ullman
\cite{GRU12} show how to answer cut queries with $O(|\V|^{1.5})$ error.
Blocki et al.~\cite{BBDS12} improve the error for small cuts. 
Relatedly, Gupta et al.~\cite{GLMRT10} show that we can privately release a cut of close to minimal size with 
error $O(\log |\V|)/\epsilon$, and that this is optimal. 
Since the size of a cut is the number of edges crossing the cut, it can also be viewed as
the sum of the weights of the edges crossing the cut in a $\binset$-weighting of the complete graph.
Consequently the problem can be restated naturally in our model. 

As we have discussed, the problem of approximating all threshold functions on a totally ordered set
is equivalent to releasing approximate distances on the path graph. In addition to the work of 
Dwork et al.~\cite{DNPR10} described above, additional bounds and reductions
for query release and learning of threshold functions are shown in Bun et al.~\cite{BNSV15}.\\



\section{The privacy model}
\label{sec:prelim}
Let $\G=(\V,\E)$ denote an undirected graph with vertex set $\V$ and edge set $\E$, and let
$w:\E\to \RR^+$ be a \emph{weight function}. (The shortest path results in Section \ref{sec:DPSP}
also apply to directed graphs.)
Let $V=|\V|$ and $E=|\E|$ be the number of vertices and edges, respectively.

Let $\P_{xy}$ denote the set of paths between a pair of vertices $x,y\in\V$.
For any path $P\in \P_{xy}$,
the weight $w(P)$ is the sum $\sum_{e\in P} w(e)$ of the weights of the edges of $P$.
The distance $d_w(x,y)$ from $x$ to $y$ denotes the weighted distance $\min_{P\in \P_{xy}} w(P)$.
We will denote the unweighted or hop distance from $x$ to $y$ by
$\hd(x,y) = \min_{P\in \P_{xy}} \hl(P)$, where the hop length $\hl(P)$ of path 
$P=(v_0, \ldots, v_\ell)$ is the number $\ell$ of edges on the path. 
Let the shortest path $\SP_w(x,y)$ denote a path from $x$ to $y$ of minimum possible weight
$w(\SP_w(x,y))=d_w(x,y)$. 

We now formally define differential privacy in the private edge weight model.

\begin{defn}
For any edge set $\E$, two weight functions $w,w':\E\to\mathbb{R}^+$ are \textbf{neighboring}, denoted $w\sim w'$,
if $$\|w-w'\|_1 = \sum_{e\in \E} | w(e) - w'(e) | \leq 1.$$
\end{defn}

\begin{defn} 
For any graph $\G=(\V,\E)$, let $\A$ be an algorithm that takes as input a weight function $w:\E\to\mathbb{R}^+$. 
$\A$ is 
\textbf{$(\epsilon,\delta)$-differentially private} on $\G$ if 
for all pairs of neighboring weight functions $w,w'$
and for all sets of possible outputs $S$, we have that
$$\Pr[\A(w)\in S] \leq e^{\epsilon} \Pr[\A(w')\in S] + \delta.$$
If $\delta=0$ we say that $\A$ is $\epsilon$-differentially private on $\G$.
\end{defn}
For a class $\C$ of graphs, we say that $\A$ is $(\epsilon,\delta)$-differentially private on $\C$ if 
$\A$ is $(\epsilon,\delta)$-differentially private on $\G$ for all $\G\in\C$.

We now define our accuracy criteria for the approximate shortest paths and distances problems.
\begin{defn}
For the shortest path problem, the \textbf{error} of a path $P\in\P_{xy}$ between vertices 
$x,y$ is the difference
$w(P) - d_w(x,y)$ between the length of $P$ and the length of the shortest path between $x$ and $y$.
\end{defn}

\begin{defn}
For the approximate distances problem, the \textbf{error} is the absolute difference between
the released distance between a pair of vertices $x,y$ and the actual distance $d_w(x,y)$.
\end{defn}

\section{Preliminaries}

We will now introduce a few basic tools which will be used throughout the remainder of this paper.
A number of differential privacy techniques incorporate noise sampled according to the Laplace distribution.
We define the distribution and state a concentration bound for sums of Laplace random variables.

\begin{defn} 
The \textbf{Laplace distribution} with scale $b$, denoted $\Lap(b)$, is the distribution with probability density function
given by
$$p(x)=\frac{1}{2b}\exp(-|x|/b).$$
If $Y$ is distributed according to $\Lap(b)$, then for any $t>0$ we have that $\Pr[|Y|>t\cdot b] = e^{-t}$. 
\end{defn}

\begin{lemma} [Concentration of Laplace random variables \cite{CSS10}]
Let $X_1,\ldots,X_t$ be independent random variables distributed according to $\Lap(b)$,
and let $X=\sum_i X_i$.  
Then for all $\gamma\in (0,1)$ we have that with probability at least $1-\gamma$,
$$|X| < 4b\sqrt{t}\ln(2/\gamma) = O(b\sqrt{t}\log(1/\gamma)).$$
\label{lemma:LapSumConcBd}
\end{lemma}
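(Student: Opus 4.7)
The natural approach is a Chernoff-style tail bound via the moment generating function of the Laplace distribution. First I would compute, by direct integration, that $\EE[e^{sY}] = 1/(1-b^2 s^2)$ for any $Y\sim\Lap(b)$ and $|s|<1/b$. By independence of the $X_i$, this immediately gives $\EE[e^{sX}] = (1-b^2 s^2)^{-t}$.

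Next, for any $s>0$ with $b^2 s^2 \leq 1/2$, Markov's inequality applied to $e^{sX}$ gives $\Pr[X > u] \leq e^{-su}(1-b^2s^2)^{-t}$, and the elementary estimate $\ln(1/(1-x)) \leq 2x$ on $x\in[0,1/2]$ yields $\Pr[X>u] \leq \exp(-su + 2tb^2 s^2)$. This is the standard sub-exponential-style exponent, and everything now reduces to choosing $s$ well.

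I would then choose $s$ to balance the two terms against the target deviation $u = 4b\sqrt{t}\ln(2/\gamma)$. In the ``sub-Gaussian'' regime $\ln(2/\gamma) \leq \sqrt{t/2}$, taking $s = \ln(2/\gamma)/(b\sqrt{t})$ lies in the valid range $b^2 s^2 \leq 1/2$ and produces exponent $-2\ln^2(2/\gamma)$, which is at most $-\ln(2/\gamma)$, so $\Pr[X>u]\leq \gamma/2$. In the complementary ``sub-exponential'' regime $\ln(2/\gamma) > \sqrt{t/2}$, the unconstrained optimum exits the MGF's domain of validity, so instead I would take $s = 1/(b\sqrt{2})$ (saturating the constraint) and verify that the resulting exponent $-u/(b\sqrt{2}) + t$ is at most $-\ln(2/\gamma)$ for this same $u$---a short calculation using $t < 2\ln^2(2/\gamma)$ in this regime.

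Finally, since $\Lap(b)$ is symmetric about $0$, the identical argument applied to $-X$ gives the same tail bound for $\Pr[-X > u]$, and a union bound yields $\Pr[|X|>u]\leq\gamma$ as desired. The main subtlety is exactly the regime split forced by the MGF's restricted domain $|bs|<1$: the stated uniform bound $4b\sqrt{t}\ln(2/\gamma)$ is loose in the sub-Gaussian regime (where the tight bound would be $O(b\sqrt{t\ln(1/\gamma)})$) but it is a convenient single expression that correctly captures the sub-exponential tail and handles both concentration regimes at once.
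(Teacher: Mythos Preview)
Your argument is correct. The MGF computation, the bound $\ln(1/(1-x))\le 2x$ on $[0,1/2]$, the regime split at $\ln(2/\gamma)=\sqrt{t/2}$, and the two choices of $s$ all check out; in particular, in the sub-exponential regime the verification that $-2\sqrt{2}\sqrt{t}\,\ln(2/\gamma)+t\le -\ln(2/\gamma)$ goes through because the quadratic $a\mapsto 2\sqrt{2}\,a\,\ln(2/\gamma)-a^2$ is increasing on $[1,\sqrt{2}\ln(2/\gamma))$ and already exceeds $\ln(2/\gamma)$ at $a=1$ (using $\ln(2/\gamma)>\ln 2$).

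As for comparison with the paper: the paper does not prove this lemma at all. It is stated with a citation to \cite{CSS10} and used as a black box. So there is no ``paper's own proof'' to compare against; your Chernoff/MGF argument is exactly the standard route (and is essentially how the cited reference proceeds). Your closing remark about the bound being loose in the sub-Gaussian regime is also apt: the paper only ever uses the lemma in the form $O(b\sqrt{t}\log(1/\gamma))$, so the unified expression is all that is needed.
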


One of the first and most versatile differentially private algorithms is the Laplace mechanism, which
releases a noisy answer with error sampled from the Laplace distribution with scale proportional to
the sensitivity of the function being computed.

\begin{defn}
For any function $f:\X\to\RR^k$, the \textbf{sensitivity} 
$$\Delta f = \max_{\overset{w,w'\in\X}{w\sim w'}} \| f(w) - f(w')\|_1$$
is the largest amount $f$ can differ in $\ell_1$ norm between neighboring inputs.
\end{defn}
In our setting we have $\X=(\RR^+)^\E$.

\begin{lemma}[Laplace mechanism \cite{DMNS06}]
Given any function $f:\X\to \RR^k$, the Laplace mechanism on input $w\in\X$
independently samples $Y_1,\ldots, Y_k$ according to $\Lap(\Delta f/\epsilon)$ and outputs
$$\M_{f,\epsilon}(w)=f(w) + (Y_1,\ldots,Y_k).$$
The Laplace mechanism is $\epsilon$-differentially private.
\label{lemma:LapMech}
\end{lemma}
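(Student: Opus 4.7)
The plan is to verify the privacy guarantee by comparing output densities pointwise. Fix any pair of neighboring weight functions $w \sim w'$ and any measurable set $S \subseteq \RR^k$. Since the noise vector $(Y_1,\ldots,Y_k)$ has a product Laplace density, the output $\M_{f,\epsilon}(w)$ admits a density $p_w$ on $\RR^k$ given by
\[
p_w(z) = \prod_{i=1}^k \frac{1}{2b}\exp\!\bigl(-|z_i - f(w)_i|/b\bigr),
\]
where $b = \Delta f/\epsilon$. It will suffice to show the pointwise bound $p_w(z) \leq e^\epsilon p_{w'}(z)$ for every $z\in\RR^k$, since integrating this inequality over $S$ yields $\Pr[\M_{f,\epsilon}(w)\in S] \leq e^\epsilon \Pr[\M_{f,\epsilon}(w')\in S]$ as required.

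First I would form the density ratio and reduce it to a single exponential:
\[
\frac{p_w(z)}{p_{w'}(z)} = \prod_{i=1}^k \exp\!\bigl( ( |z_i - f(w')_i| - |z_i - f(w)_i| )/b \bigr) = \exp\!\left( \frac{1}{b}\sum_{i=1}^k \bigl(|z_i - f(w')_i| - |z_i - f(w)_i|\bigr) \right).
\]
Next I would apply the reverse triangle inequality coordinatewise, $|z_i - f(w')_i| - |z_i - f(w)_i| \leq |f(w)_i - f(w')_i|$, so the sum in the exponent is at most $\|f(w) - f(w')\|_1$. By the definition of sensitivity this is at most $\Delta f$, so the ratio is bounded by $\exp(\Delta f / b) = \exp(\epsilon)$. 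This gives the desired pointwise bound and hence $\epsilon$-differential privacy.

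There is no serious obstacle here; the only thing to be careful about is that the pointwise density comparison genuinely implies the probability comparison for arbitrary measurable $S$, which follows from nonnegativity of the densities and the standard inequality $\int_S p_w \leq e^\epsilon \int_S p_{w'}$ obtained by integrating the pointwise bound. The argument also shows that the bound holds with $\delta=0$, matching the stated conclusion.
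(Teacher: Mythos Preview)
Your proof is correct and is the standard density-ratio argument for the Laplace mechanism. The paper itself does not supply a proof of this lemma; it is stated as a preliminary result with a citation to \cite{DMNS06}, so there is nothing to compare against beyond noting that your argument matches the original proof in that reference.
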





Finally, we will need the following results on the composition of differentially private algorithms. 

\begin{lemma} [Basic Composition, e.g. \cite{DKMMN06}]
For any $\epsilon$, $\delta\geq 0$,
the adaptive composition of $k$ $(\epsilon,\delta)$-differentially private mechanisms is 
$(k\epsilon, k\delta)$-differentially private.
\label{lemma:basic_composition}
\end{lemma}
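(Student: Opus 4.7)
The plan is to prove the lemma by induction on $k$. The base case $k=1$ is immediate from the definition of differential privacy. For the inductive step, the main idea is to exploit a ``good-event'' decomposition of each $(\epsilon,\delta)$-DP mechanism: up to a small alteration in total variation, each mechanism can be treated as satisfying pure $\epsilon$-indistinguishability. More precisely, I would use the equivalent characterization that a mechanism $A$ is $(\epsilon,\delta)$-DP on neighbors $w\sim w'$ if and only if there is an event $E$ in the output space with $\Pr[A(w)\in E^c], \Pr[A(w')\in E^c]\le \delta$ such that restricted to $E$ the distributions of $A(w)$ and $A(w')$ are pointwise within a multiplicative factor of $e^\epsilon$.

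Applying this at each round, I would build a good event on the joint output space $(y_1,\ldots,y_k)$ as follows. For each fixed history $h_{i-1}=(y_1,\ldots,y_{i-1})$, let $E_i(h_{i-1})$ be the good event associated with mechanism $M_i$ conditional on that history, and define the overall good event to be the set of output sequences with $y_i\in E_i(y_1,\ldots,y_{i-1})$ for every $i$. By the tower property of conditional probability together with a union bound across the $k$ rounds, the complement has probability at most $k\delta$ under both $M(w)$ and $M(w')$. On the good event the joint density factors across rounds into pure-DP factors, each within a multiplicative $e^\epsilon$, so the ratio of joint densities is bounded by $e^{k\epsilon}$. Combining these two facts yields $\Pr[M(w)\in S]\le e^{k\epsilon}\Pr[M(w')\in S]+k\delta$ for every measurable $S$, which is the required $(k\epsilon,k\delta)$-DP bound.

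The main obstacle I expect is ensuring that the $\delta$ error accumulates additively rather than multiplicatively across rounds. A naive hybrid argument that replaces one mechanism at a time and applies $(\epsilon,\delta)$-DP of each $M_i$ directly produces an error of roughly $(e^{(k-1)\epsilon}+\cdots+e^\epsilon+1)\delta$ rather than the desired $k\delta$, because each $\delta$ picks up a factor from the subsequent pure-DP composition. The good-event approach avoids this by pooling all failures into one union-bound event rather than bounding them sequentially through iterated expectations. The remaining technical step is to argue measurability of the history-dependent good events $E_i(h_{i-1})$ and to verify that adaptivity does not interfere with the tower-property computation; both are routine once the joint event is set up carefully.
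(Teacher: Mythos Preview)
The paper does not prove this lemma; it is quoted as a standard result from \cite{DKMMN06}, so there is no paper proof to compare against. Your overall strategy---decompose each round into a ``pure $\epsilon$-DP part'' plus a residual of mass at most $\delta$, then pool the residuals so that the $\delta$'s add rather than compound---is exactly the idea behind the standard argument, and you correctly diagnose why the naive hybrid loses a factor $1+e^\epsilon+\cdots+e^{(k-1)\epsilon}$.

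The gap is in the characterization you invoke: it is \emph{not} true that $(\epsilon,\delta)$-DP of $A$ on $w\sim w'$ is equivalent to the existence of an event $E$ with $\Pr[A(w)\notin E],\Pr[A(w')\notin E]\le\delta$ on which the two densities are within a factor $e^{\epsilon}$. Take $\epsilon=0$ and $P=(3/4,1/4)$, $Q=(1/2,1/2)$ on a two-point space; then $\|P-Q\|_{\mathrm{TV}}=1/4$, so the pair is $(0,1/4)$-indistinguishable, yet $p\neq q$ everywhere and the only set on which even $p\le q$ holds is $\{2\}$, whose complement has $P$-mass $3/4>\delta$. The correct tool is a \emph{measure} decomposition rather than an event: from $P(T)\le e^{\epsilon}Q(T)+\delta$ one obtains $P=P'+R$ with density $p'=\min(p,e^{\epsilon}q)\le e^{\epsilon}q$ and residual $R\ge 0$ of total mass $\int\max(0,p-e^{\epsilon}q)\le\delta$. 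Applying this at each round and telescoping
\[
\prod_i(P_i'+R_i)-\prod_i P_i'=\sum_j\Bigl(\prod_{i<j}P_i\Bigr)\,R_j\,\Bigl(\prod_{i>j}P_i'\Bigr),
\]
each summand integrates to at most $\delta$ (since $R_j(\cdot\mid h)$ has mass $\le\delta$ and the other factors are sub-probability kernels), while $\prod_i P_i'\le e^{k\epsilon}\prod_i Q_i$. This yields exactly $(k\epsilon,k\delta)$ and is the repaired version of your plan.
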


\begin{lemma} [Composition \cite{DRV10, DR13}]
For any $\epsilon, \delta, \delta' \geq 0$, the adaptive composition of $k$ $(\epsilon, \delta)$-differentially
private mechanisms is $(\epsilon', k\delta+\delta')$-differen-tially private for
$$\epsilon' = \sqrt{2k\ln(1/\delta')}\cdot\epsilon + k\epsilon(e^{\epsilon}-1)$$
which is $O(\sqrt{k\ln(1/\delta')}\cdot\epsilon)$
provided $k \leq 1/\epsilon^2$.
In particular, if $\epsilon' \in(0,1), \delta,\delta'\geq 0$, the composition of $k$ $(\epsilon, \delta)$-differentially private
mechanisms is $(\epsilon', k\delta+\delta')$-differentially private 
for $\epsilon = O(\epsilon' / \sqrt{k\ln(1/\delta')})$. 
\label{lemma:adv_composition}
\end{lemma}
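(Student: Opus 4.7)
The plan is to analyze the privacy loss random variable of the $k$-fold composed mechanism. Fix neighboring weight functions $w\sim w'$, let $\M=(\M_1,\ldots,\M_k)$ be the adaptive composition, and for an output tuple $y=(y_1,\ldots,y_k)$ define
$$c(y)=\ln\frac{\Pr[\M(w)=y]}{\Pr[\M(w')=y]}=\sum_{i=1}^k \ln\frac{\Pr[\M_i(w)=y_i\mid y_{<i}]}{\Pr[\M_i(w')=y_i\mid y_{<i}]}=\sum_{i=1}^k c_i(y_i).$$
Because each $\M_i$ is chosen adaptively from the previous outputs but is itself $(\epsilon,\delta)$-DP for any fixed history, the sequence $c_1,c_2,\ldots$ forms a martingale difference sequence with respect to the natural filtration generated by $(y_{<i})$, and it suffices to prove a tail bound for $\sum_i c_i$ under $y\sim\M(w)$.

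I would first treat the pure case $\delta=0$. Here each $|c_i|\leq\epsilon$ almost surely, and a standard calculation (writing $e^{c_i}$ as a ratio of densities and using $(\epsilon,0)$-DP together with $\EE_{y_i\sim\M_i(w)}[e^{-c_i}]=1$) shows
$$\EE[c_i\mid y_{<i}]\;\leq\;\epsilon(e^{\epsilon}-1).$$
Applying Azuma--Hoeffding to the bounded martingale $\sum_{i=1}^k(c_i-\EE[c_i\mid y_{<i}])$ with increments in $[-\epsilon,\epsilon]$, for any $\delta'>0$ we get
$$\Pr\Bigl[\sum_{i}c_i \;>\; k\epsilon(e^{\epsilon}-1)+\epsilon\sqrt{2k\ln(1/\delta')}\Bigr]\;\leq\;\delta'.$$
Letting $\epsilon'$ denote the threshold and $B=\{y:c(y)>\epsilon'\}$, for any measurable $S$,
$$\Pr[\M(w)\in S]\leq \Pr[\M(w)\in S\setminus B]+\Pr[\M(w)\in B]\leq e^{\epsilon'}\Pr[\M(w')\in S]+\delta',$$
which is exactly $(\epsilon',\delta')$-DP.

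To lift to the approximate case $\delta>0$, I would use the standard equivalence that $(\epsilon,\delta)$-DP is captured (up to a $\delta$ in total variation) by a mechanism with bounded pointwise privacy loss. Concretely, for each round $i$ and each history, one can couple $\M_i(w)$ with $\M_i(w')$ so that with probability at least $1-\delta$ (on each side) the realized output lies in the "good" set on which $|c_i|\leq\epsilon$, and otherwise the output is arbitrary. A union bound over the $k$ rounds says that the probability of leaving the good set in any round is at most $k\delta$; on the good event all $c_i$ are bounded by $\epsilon$, and the Azuma argument above applies verbatim. Combining the two failure modes yields the composed guarantee $(\epsilon',k\delta+\delta')$-DP, and the simplified form $\epsilon'=O(\sqrt{k\ln(1/\delta')}\cdot\epsilon)$ follows by Taylor-expanding $\epsilon(e^{\epsilon}-1)=O(\epsilon^2)$ when $k\leq 1/\epsilon^2$, so the linear-in-$k$ term $k\epsilon(e^\epsilon-1)$ is dominated by the $\sqrt{k}$ term.

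The main obstacles are bookkeeping for adaptivity and the reduction from approximate to pure DP. For adaptivity, one must carefully define the filtration and verify that both the bound $|c_i|\leq\epsilon$ and the KL-style bound $\EE[c_i\mid y_{<i}]\leq\epsilon(e^\epsilon-1)$ hold conditionally on every history, which requires that $\M_i$ is $(\epsilon,\delta)$-DP for every realized prefix rather than merely on average. For the coupling step, the subtlety is that $(\epsilon,\delta)$-DP does not directly say "$\M_i$ is $\epsilon$-DP with probability $1-\delta$"; one needs the Dwork--Rothblum--Vadhan formulation that any $(\epsilon,\delta)$-DP mechanism can be replaced, at total-variation cost $\delta$, by a mechanism with the desired pointwise privacy-loss bound, and then compose these coupled mechanisms round by round.
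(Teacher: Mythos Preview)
The paper does not prove this lemma at all: it is stated in the preliminaries with a citation to \cite{DRV10, DR13} and used as a black box. So there is no ``paper's own proof'' to compare against.

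That said, your sketch is essentially the standard advanced-composition argument from \cite{DRV10}: define the privacy-loss random variable as a sum of per-round log-likelihood ratios, bound each summand's conditional mean by $\epsilon(e^{\epsilon}-1)$ and its range by $[-\epsilon,\epsilon]$, apply Azuma--Hoeffding to get a tail bound, and then patch in the $\delta>0$ case via the simulation/coupling characterization of $(\epsilon,\delta)$-DP. This is the right outline. One small correction: the centered increments $c_i-\EE[c_i\mid y_{<i}]$ lie in an interval of width $2\epsilon$ but not literally in $[-\epsilon,\epsilon]$, so the Azuma constant needs to be stated for increments bounded in absolute value by $2\epsilon$ (or, as in \cite{DRV10}, one works directly with the moment generating function rather than invoking Azuma in the off-the-shelf form). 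Your identification of the delicate step---that $(\epsilon,\delta)$-DP does not literally mean ``$\epsilon$-DP except with probability $\delta$'' and that one needs the Dwork--Rothblum--Vadhan dense-model/simulation lemma to make the coupling precise---is exactly right, and is where most of the work in a full proof would go.
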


\section{Computing distances}
\label{sec:DPSdists}


In this section we consider the problem of releasing distance oracle queries in the private edge weights model.
Since neighboring weight functions differ by at most $1$ in $\ell_1$ norm, 
the weight of any path also changes by at most 1.
Consequently, a single distance oracle query is sensitivity-1, and so we can use the Laplace mechanism
(Lemma \ref{lemma:LapMech})
to answer it privately after adding noise proportional to $1/\epsilon$. 
However, what if we would like to learn all-pairs distances privately?

There are $V^2$ pairs $(s,t)$ of vertices, so we can achieve $\epsilon$-differential privacy
by adding to each query Laplace noise proportional to $V^2/\epsilon$.
We can do better using approximate differential privacy ($\delta > 0$) and
Lemma \ref{lemma:adv_composition} (Composition). 
Adding Laplace noise proportional to $1/\epsilon'$
to each query results in a mechanism that is 
$(V\epsilon' \sqrt{2\ln(1/\delta)} + V^2\epsilon'(e^{\epsilon'}-1),\delta)$-differentially private
for any $\delta>0$. Taking $\epsilon'=\epsilon/O(V\sqrt{\ln 1/\delta})$ for $\epsilon<1$, we obtain a mechanism that is
$(\epsilon,\delta)$-differentially private by adding $O(V\sqrt{\ln 1/\delta})/\epsilon$ noise to each query.


The other natural approach is to release an $\epsilon$-differentially private 
version of the graph by adding  $\Lap(1/\epsilon)$ noise to each edge.
This will be the basis for our approach in Algorithm \ref{alg:SP} for computing approximate shortest paths.
With probability $1-\gamma$, all $E$ Laplace random variables will
have magnitude $(1/\epsilon)\log (E/\gamma)$, so
the length of every path in the released synthetic graph is within $(V/\epsilon)\log (E/\gamma)$
of the length of the corresponding path in the original graph.
Therefore with probability $1-\gamma$ we have that all pairs distances in the released synthetic
graph will be within $(V/\epsilon) \log (E/\gamma)$ of the corresponding distances in the original graph.

Both of these approaches result in privately releasing all pairs distances with additive error
roughly $V/\epsilon$. It is natural to ask whether this linear dependance on $V$ is the best possible.
In the remainder of this section we present algorithms which substantially improve on this 
bound for two special classes of graphs, trees and arbitrary graphs with edges of bounded weight. 
For trees we can obtain all-pairs distances with error $\polylog(V)/\epsilon$,
and for graphs with edges in $[0,M]$ we can obtain all-pairs distances with error roughly $\sqrt{VM/\epsilon}$. 

\subsection{Distances in trees}
\label{ssec:DPSdists_trees}


For trees it turns out to be possible to release all-pairs distances with far less error. 
We will first show a single-source version of this result for rooted trees, 
which we will then use to obtain the full result.
The idea is to split the tree into subtrees of at most half the size
of the original tree. As long as we can release the distance from the root to each subtree
with small error, we can then recurse on the subtrees. 

The problem of privately releasing all-pairs distances for the path graph can be restated as the problem of privately releasing
threshold functions for a totally ordered data universe of size $E=V-1$. 
Dwork et al.~\cite{DNPR10} showed that we can privately maintain a running counter for $T$ timesteps, 
with probability
$1-\gamma$ achieving additive accuracy of $O(\log(1/\gamma) \log^{2.5} T / \epsilon)$ at each timestep.
This algorithm essentially computes all threshold queries for a totally ordered universe of size $T$,
which is equivalent to computing the distance from an endpoint to each other vertex of the path graph.
We match the accuracy of the \cite{DNPR10} algorithm and generalize the result to arbitrary trees.  

In Appendix \ref{sec:ap-dists-path} we provide an alternate algorithm for the path graph which achieves the same 
asymptotic bounds. This result can be viewed as a restatement of the \cite{DNPR10} algorithm.

\begin{theorem} [Single-source distances on rooted trees]
Let $\T=(\V,\E)$ be a tree with root vetex $v_0$, and let $\epsilon>0$. Then
there is an algorithm that is $\epsilon$-differentially private on $\T$ that on input
edge weights $w:\E\to\RR^+$
releases approximate distances from $v_0$ to each other vertex,
where with probability $1-\gamma$ the approximation error on each released distance is
$$O(\log^{1.5}V \cdot\log (1/\gamma)) / \epsilon$$ for any $\gamma\in(0,1)$.
\label{thm:SSdists-trees}
\end{theorem}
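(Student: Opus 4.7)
The plan is a recursive centroid-decomposition procedure. Each recursive call takes a subtree $T'$ of $\T$ and a designated root $r'\in T'$. It computes a centroid $c'$ of $T'$ (a vertex whose removal leaves components of size at most $|T'|/2$), releases a noisy path length $\tilde{d}(r',c') = d(r',c') + \Lap(\log V/\epsilon)$, and for each component $T_i'$ of $T'\setminus\{c'\}$ not containing $r'$ additionally releases the noisy edge weight $\tilde{w}(c',v_i) = w(c',v_i) + \Lap(\log V/\epsilon)$, where $v_i\in T_i'$ is adjacent to $c'$. It then recurses on $(T_i',v_i)$ for each component not containing $r'$, and on $(T_j',r')$ for the component that does. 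For a target vertex $v$ lying in a component $T_i'$ that does not contain the current root $r'$, we define $\tilde{d}(r',v) := \tilde{d}(r',c') + \tilde{w}(c',v_i) + \tilde{d}(v_i,v)$, using the value returned by the recursive call; otherwise $\tilde{d}(r',v)$ comes directly from the recursive call on the same-root component.

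For the privacy analysis, the key observation is that within a single recursive call every edge of $T'$ participates in at most one released quantity: either it lies on the $r'$-to-$c'$ path (and affects only $\tilde{d}(r',c')$), or it equals some $(c',v_i)$ with $v_i$ outside $r'$'s component (and affects only $\tilde{w}(c',v_i)$). The child calls at a given recursion level operate on edge-disjoint subtrees, and the recursion tree has depth at most $\log_2 V$, so each edge of $\T$ contributes to at most $\log V$ released quantities across the whole computation. Hence the $\ell_1$-sensitivity of the combined vector of released values is at most $\log V$, and by Lemma~\ref{lemma:LapMech}, independently adding $\Lap(\log V/\epsilon)$ noise to each coordinate yields an $\epsilon$-differentially private release.

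For accuracy, unrolling the recursive definition writes $\tilde{d}(v_0,v) - d(v_0,v)$ as a signed sum of at most $2\log V$ independent $\Lap(\log V/\epsilon)$ random variables: each level on the recursion path for $v$ contributes either zero noise (when $v$ stays in the current root's component) or exactly two noises, $\tilde{d}(r',c') - d(r',c')$ and $\tilde{w}(c',v_i) - w(c',v_i)$, when $v$ is in an off-root component. Applying Lemma~\ref{lemma:LapSumConcBd} to this sum of at most $2\log V$ variables of scale $\log V/\epsilon$ bounds its magnitude by $O\bigl((\log V/\epsilon)\sqrt{\log V}\log(1/\gamma)\bigr) = O(\log^{1.5}V\cdot\log(1/\gamma)/\epsilon)$ with probability at least $1-\gamma$, matching the theorem.

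The main thing to get right is the bookkeeping of what is released at each level. We need the collection of released path lengths simultaneously to (i) have per-edge multiplicity $O(\log V)$ so that the sensitivity is only $O(\log V)$, and (ii) be rich enough that every single-source distance $d(v_0,v)$ can be reconstructed from a sum of at most $O(\log V)$ of them. Treating the component containing the current root separately (recursing without adding extra noise), and releasing just the single edge weight $w(c',v_i)$ rather than a full path to each off-root component, is precisely what lets both conditions hold at once.
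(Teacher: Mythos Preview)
Your proposal is correct and takes essentially the same approach as the paper: a centroid decomposition in which each level releases the root-to-centroid distance together with the centroid-to-child edge weights, all with $\Lap(\log V/\epsilon)$ noise, giving overall sensitivity $\log V$ and allowing every root-to-vertex distance to be written as a sum of at most $2\log V$ released quantities. The only differences are cosmetic---the paper keeps the centroid $v^*$ inside the root-side subtree $\T_0$ rather than deleting it, and it phrases the accuracy argument via an auxiliary ``query graph'' $(\V,Q)$ instead of unrolling the recursion directly.
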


\begin{figure*}[t]
\begin{center}
\begin{tikzpicture}[mn/.style={circle,fill=gray!20, minimum size=.7cm}]

\draw (0,0) node[mn] (v_0) {$v_0$} -- (1.5,1.5) node[mn] (a1) {};
\draw (v_0) -- (1.5,0) node[mn] (a2) {};
\draw (v_0) -- (1.5,-1.5) node[mn] (a3) {};

\draw (a1) -- (3,1.5) node[mn,fill=white] (e1) {$\cdots$};
\draw (a2) -- (3,0) node[mn,fill=white] (e2) {$\cdots$};
\draw (a3) -- (3,-1.5) node[mn,fill=white] (e3) {$\cdots$};

\draw (e3) -- (4.5,-1.5) node[mn] (v^*) {$v^*$};
\draw (v^*) -- (6.5,1.125) node[mn] (v_1) {$v_1$};
\draw (v^*) -- (6.5,-.75) node[mn] (v_2) {$v_2$};
\draw (6.5,-1.8) node[mn,fill=white] (v_dots) {$\vdots$};
\draw (v^*) -- (6.5,-3.05) node[mn] (v_t) {$v_t$};

\draw (v_1) -- (8,.75) node[mn,fill=white] (f_1) {$\cdots$};
\draw (v_2) -- (8,-.75) node[mn,fill=white] (f_2) {$\cdots$};
\draw (v_t) -- (8,-3.05) node[mn,fill=white] (f_t) {$\cdots$};

\draw (v_1) -- (8,1.5) node[mn,fill=white] (f_1prime) {$\cdots$};

\definecolor{mediumblue}{rgb}{0.0, 0.0, 0.8}
\node [draw=mediumblue, fit= (v_0) (v^*) (e1), inner sep=0.2cm, dashed] (T0) {};
\node [draw=mediumblue, fit= (v_1) (f_1) (f_1prime), inner sep=0.2cm, dashed] (T1) {};
\node [draw=mediumblue, fit= (v_2) (f_2), inner sep=0.2cm, dashed] (T2) {};
\node [draw=mediumblue, fit= (v_t) (f_t), inner sep=0.2cm, dashed] (Tt) {};

\node [xshift=-2.0ex, mediumblue] at (T0.west) {$\mathcal{T}_0$};
\node [xshift=-2.0ex, mediumblue] at (T1.west) {$\mathcal{T}_1$};
\node [xshift=-2.0ex, mediumblue] at (T2.west) {$\mathcal{T}_2$};
\node [xshift=-2.0ex, mediumblue] at (Tt.west) {$\mathcal{T}_t$};

\end{tikzpicture}
\end{center}
\caption{The partition used in Algorithm \ref{alg:tree-dists} to separate a tree into subtrees of size at most $V/2$.}
\label{fig:treepartition}
\end{figure*}
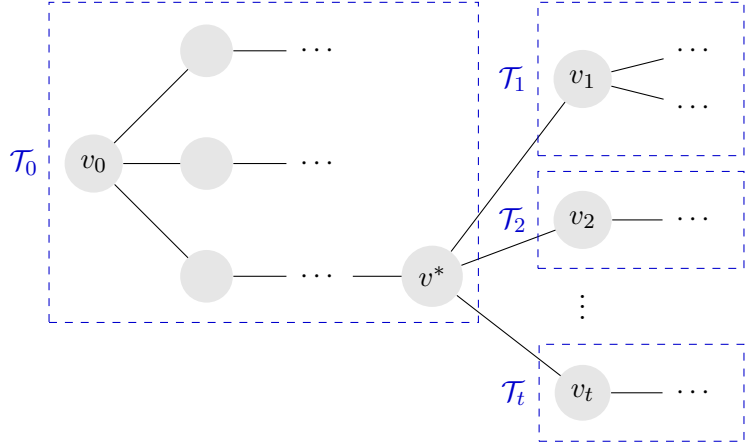

\begin{proof}
Given the tree $\T$ and root $v_0$, we will partition $\V$ into subtrees of size at most $V/2$ as shown in Figure 
\ref{fig:treepartition} and recursively
obtain distances in each subtree. 
There exists some vertex $v^*$ such that the subtree rooted at $v^*$ 
contains more than $V/2$ vertices while the subtree rooted at each child of $v^*$
has at most $V/2$ vertices. The topology of the graph is public, so we can release vertex $v^*$. 
Let $v_1, \ldots, v_t$ be the children of $v^*$, and let $\T_i=(\V_i,\E_i)$ be the subtree rooted at $v_i$
for each $i\in[t]$. Let $\T_0=(\V_0,\E_0)$ be the subtree rooted at $v_0$ consisting of the remaining vertices
$\V\setminus (\V_1\cup\cdots\cup \V_t)$. 

We can compute and release distances between the following 
pairs of vertices, adding $\Lap(\log V / \epsilon)$ noise to each distance:\\
\begin{itemize}
\item $(v_0,v^*)$
\item $(v^*, v_i)$ for each $i\in [t]$
\end{itemize}
Since $v^*$ is the parent of $v_1,\ldots,v_t$ in the tree rooted at $v_0$, the path from
$v_0$ to $v^*$ in $T$ contains none of the edges $(v^*,v_i)$ for $i\in[t]$, so the function
which releases these $t+1$ distances is sensitivity-1. 

We then recursively repeat the procedure on each subtree $\T_0,\ldots, \T_t$ until we reach
trees containing only a single vertex, adding $\Lap(\log V / \epsilon)$ noise to each released value. 
Each subtree has size at most $V/2$, so the depth of recursion is bounded by $\log V$.
The subtrees $\T_0,\ldots, \T_t$ are also disjoint. Consequently the function which releases 
all of the distances at depth $d$ in the recursion has sensitivity $1$ for any $d$. 
Therefore the function which releases all distances queried in this recursive procedure has
sensitivity at most $\log V$. Since we add $\Lap(\log V / \epsilon)$ noise to each coordinate
of this function, the algorithm outlined above is an instantiation of the Laplace mechanism (Lemma \ref{lemma:LapMech})
and is $\epsilon$-differentially private. 

\begin{figure*}[t]
\begin{algorithm}[Rooted tree distances]
\label{alg:tree-dists}
\noindent
\Inputs{Tree $\T=(\V,\E)$, root $v_0\in\V$, edge weights $w:\E\to\mathbb{R}^+$, and number of vertices $n$ of original tree.}
\begin{enumerate}
\item
Let $v^*$ be the unique vertex such that the subtree rooted at $v^*$ has more than $V/2$ vertices
while the subtree rooted at each child of $v^*$ has at most $V/2$ vertices.
\item Let $v_1,\ldots,v_t$ be the children of $v^*$.
\item Let $\T_i$ be the subtree rooted at $v_i$ for $i\in [t]$, and let $\T_0 = \T \setminus \{\T_1,\ldots,\T_t\}$.
\item Sample $X_{v^*,\T} \leftarrow  \Lap(\log V/\epsilon)$ and let $d(v^*,\T) = d_w(v_0,v^*) + X_{v^*,\T}$.
\item Let $d(v_0,\T)=0$. 
\item For each $i\in [t]$, sample $X_{v_i, \T} \leftarrow  \Lap(\log V/\epsilon)$ and
let $d(v_i,\T) = d(v^*,\T) + w((v^*,v_i)) + X_{v_i, \T}$. 
\item Recursively compute distances in each subtree $\T_0,\ldots,\T_t$. 
\item For each vertex $v\in \V$, if $v\in \T_i$, then let $d(v,\T) = d(v_i,\T) + d(v,\T_i)$. 
\item Release $d(v,\T)$ for all $v\in\V$. 
\end{enumerate}
\end{algorithm}
\end{figure*}

We now show how these queries suffice to compute the distance from root vertex $v_0$ to each other vertex with small error. 
The algorithm samples at most $2V$ Laplace random variables distributed according
to $\Lap(\log V / \epsilon)$, so by a union bound, with probability $1-\gamma$ all of these have magnitude 
$O(\log V \log (V/\gamma)) / \epsilon$. 
Consequently to obtain an error bound of roughly $O(\log^3 V) / \epsilon$
it suffices to show that any distance in $\T$ can be represented as a sum of $O(\log V)$ 
of the noisy distances released in the algorithm. We will use Lemma
\ref{lemma:LapSumConcBd} to obtain a slightly better bound on the error terms.

Let set $Q$ consist of the pairs of vertices corresponding to distance queries made by the algorithm above. 
We prove the following statement by induction. For any vertex $u\in \V$, there is a path from $v_0$ to $u$
in the graph $(\V,Q)$ consisting of at most $2\log V$ edges. The base case $V=1$ is vacuous. 
For $V>1$, note that since subtrees $T_0,\ldots, T_t$ partition the vertex set, $u$ must lie in one of them.
If $u\in T_i$, then by the inductive assumption on $T_i$, there is a path from $v_i$ to $u$ in $(V,Q)$
consisting of at most $2 \log(V/2) = 2 \log V - 2$ edges. If $i=0$ this already suffices. Otherwise,
if $i>0$, note that $(v_0,v^*)\in Q$ and $(v^*,v_i)\in Q$. Consequently 
there is a path in $(\V,Q)$ from $v_0$ to $u$ consisting of at most $2\log V$ edges.

This means that there is a set of at most $2\log V$ distances queried such that the distance from $v_0$ to
$u$ consists of the sum of these distances.
Consequently by adding these distances we obtain an estimate for the distance from $v_0$ to $u$ whose
error is the sum of at most $2\log V$ independent random variables each distributed according
to $\Lap(\log V/\epsilon)$.
By Lemma \ref{lemma:LapSumConcBd}, we have that with probability at least $1-\gamma$,
we compute an estimate of $d_w(v_0,u)$ with error $O(\log^{1.5} V \cdot \log(1/\gamma)) / \epsilon$
for any $\gamma\in(0,1)$. 
Since differentially private mechanism are preserved under post-processing, this algorithm
satisfies $\epsilon$-differential privacy and computes
distances from $v^*$ to each other vertex
in $T$, where with probability at least $1-\gamma$
each distance has error at most $O(\log^{1.5} V\cdot \log(1/\gamma)) /\epsilon$.
\end{proof}

We now extend this result to obtain all-pairs distances for trees.

\begin{theorem} [All-pairs distances on trees]
For any tree $\T=(\V,\E)$ and $\epsilon>0$ there is an algorithm that is $\epsilon$-differentially private on $\T$
and on input edge weights $w:\E\to\mathbb{R}^+$
releases all-pairs distances such that with probability $1-\gamma$, all released distances have
approximation error 
$$O(\log^{2.5} V\cdot \log(1/\gamma)) / \epsilon$$ 
for any $\gamma\in(0,1)$.
For each released distance, with probability $1-\gamma$ the approximation error is $O(\log^{1.5} V\cdot \log(1/\gamma)) / \epsilon.$
\label{thm:APdists-trees}
\end{theorem}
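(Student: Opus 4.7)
The plan is to reduce the all-pairs problem to a single invocation of the single-source algorithm from Theorem~\ref{thm:SSdists-trees}, using the identity $d_w(u,v) = d_w(v_0,u) + d_w(v_0,v) - 2\, d_w(v_0, \mathrm{LCA}(u,v))$ that holds in any tree rooted at $v_0$, since the unique $u$-$v$ path passes through their LCA.

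Concretely, I would fix an arbitrary root $v_0 \in \V$, invoke Algorithm~\ref{alg:tree-dists} with privacy parameter $\epsilon$ to obtain noisy distances $\tilde d(v_0, v)$ for every vertex $v$, and then, for each pair $(u,v)$, compute $w = \mathrm{LCA}(u,v)$ from the public tree topology and release
\[
\tilde d(u, v) \;:=\; \tilde d(v_0, u) + \tilde d(v_0, v) - 2\, \tilde d(v_0, w).
\]
Privacy is immediate: Algorithm~\ref{alg:tree-dists} is $\epsilon$-differentially private by Theorem~\ref{thm:SSdists-trees}, and the subsequent computation is post-processing that uses only public information together with the already-released noisy single-source distances.

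For accuracy, I would invoke the structural fact established inside the proof of Theorem~\ref{thm:SSdists-trees}: for every vertex $x$, the noise $\tilde d(v_0, x) - d_w(v_0, x)$ is a sum of at most $2\log V$ independent $\Lap(\log V / \epsilon)$ variables drawn inside Algorithm~\ref{alg:tree-dists}, one per level of its internal recursion. Consequently $\tilde d(u,v) - d_w(u,v)$ is a linear combination of at most $6\log V$ such independent Laplaces with coefficients in $\{-2,-1,1,2\}$. Applying Lemma~\ref{lemma:LapSumConcBd}, with the bounded coefficients absorbed into the scale at a constant-factor cost (scaling a Laplace variable by a bounded constant yields a Laplace with a proportional scale), gives per-pair error
\[
O\!\left(\frac{\log V}{\epsilon}\cdot \sqrt{\log V}\cdot \log(1/\gamma)\right) \;=\; O\!\left(\frac{\log^{1.5} V \cdot \log(1/\gamma)}{\epsilon}\right)
\]
with probability at least $1-\gamma$, which is the second bound in the theorem.

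For the simultaneous bound across all $\binom{V}{2}$ pairs, I would apply a union bound with per-pair failure probability $\gamma/V^2$, which replaces $\log(1/\gamma)$ by $\log(V^2/\gamma) = O(\log V \cdot \log(1/\gamma))$ and yields the claimed uniform error $O(\log^{2.5} V \cdot \log(1/\gamma))/\epsilon$. The main potential obstacle is verifying that the pair-distance error is really a short $O(\log V)$-term linear combination of \emph{independent} Laplaces with bounded coefficients; this is delivered for free by the centroid-style internal structure of Algorithm~\ref{alg:tree-dists}, so no additional technical work is required beyond a careful application of Lemma~\ref{lemma:LapSumConcBd}.
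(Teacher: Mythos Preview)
Your proposal is correct and follows the same approach as the paper: root the tree, run the single-source algorithm of Theorem~\ref{thm:SSdists-trees} once, and recover every pairwise distance via the LCA identity $d_w(u,v)=d_w(v_0,u)+d_w(v_0,v)-2d_w(v_0,\mathrm{LCA}(u,v))$, with privacy by post-processing and the uniform bound by a union bound over $\binom{V}{2}$ pairs. The only cosmetic difference is in the per-pair accuracy step: the paper simply invokes the per-distance guarantee of Theorem~\ref{thm:SSdists-trees} for each of the three terms and combines by the triangle inequality, whereas you open up the noise structure and apply Lemma~\ref{lemma:LapSumConcBd} once to the combined $O(\log V)$-term Laplace sum. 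Your route is fine, but note that Lemma~\ref{lemma:LapSumConcBd} is stated for an i.i.d.\ sum, so strictly speaking you must either (i) absorb the bounded coefficients by the trivial bound $|\sum_i c_i X_i|\le C\sum_i|X_i|$ and apply the lemma to the absolute values, or (ii) cite the straightforward extension to independent Laplaces with scales bounded by $O(\log V/\epsilon)$; either way this is a one-line fix, and the paper's triangle-inequality version sidesteps it entirely.
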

\begin{proof}
Arbitrarily choose some root vertex $v_0$. Use the $\epsilon$-differentially private
algorithm of Theorem \ref{thm:SSdists-trees} to obtain approximate distances from
$v_0$ to each other vertex of $\T$. 
We now show that this suffices to obtain all-pairs distances.

Consider any pair of vertices $x,y$, and let $z$ be their lowest common ancestor in the tree rooted at $v_0$.
Then 
\begin{align*}
d_w(x,y) &= d_w(z,x) + d_w(z,y)\\
 &= d_w(v_0,x) + d_w(v_0,y) - 2 d_w(v_0,z).
\end{align*}
Since with probability $1-3\gamma$ we can compute each of these three distances with error $O(\log^{1.5} V\cdot\log(1/\gamma)/\epsilon)$,
it follows that we can compute the distance between $x$ and $y$ with error
at most four times this, which is still $O(\log^{1.5} V\cdot \log(1/\gamma)) / \epsilon$. 
By a union bound, for any $\gamma\in(0,1)$, with probability at least $1-\gamma$
each error among the $V(V-1)/2$ all-pairs distances released is
at most 
$O(\log^{1.5} V \cdot\log(V/\gamma))/\epsilon = O(\log^{2.5} V \cdot\log(1/\gamma))/\epsilon$. 
\end{proof}

\subsection{Distances in bounded-weight graphs}
\label{ssec:DPSdists_bddwt}

\begin{theorem}
For all $\G, \delta,\gamma, \Lambda$, and $\epsilon\in(0,1)$, if $1/V<\Lambda\epsilon< V$
then there is an algorithm $\A$ 
that is $(\epsilon,\delta)$-differentially private on $\G$ such that for all
$w:\E\to[0,\Lambda]$, with probability $1-\gamma$, $\A(w)$ outputs 
approximate all-pairs distances with additive error 
$$O\left(\sqrt{V\Lambda\epsilon^{-1} \cdot \log (1/\delta)} \cdot\log(V\Lambda\epsilon/\gamma)\right)$$ 
per distance.
For any $\epsilon>0$, $\G,\gamma,\Lambda$, if $1/V<\Lambda\epsilon< V^2$ 
then there is an algorithm $\B$ that is $\epsilon$-differentially private on $\G$
such that for all $w:\E\to[0,\Lambda]$, with probability $1-\gamma$,
$\B(w)$ outputs approximate all-pairs distances with additive error 
$$O\left((V\Lambda)^{2/3}\epsilon^{-1/3}\log(V\Lambda\epsilon/\gamma)\right)$$ 
per distance.

\label{thm:apsp_bounded_wts}
\end{theorem}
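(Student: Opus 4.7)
The plan is to reduce the all-pairs distance problem on $\V$ to an all-pairs problem on a small ``hub'' set $\Z\subseteq\V$ chosen from the public topology, and then to route each query through nearby hubs. For a parameter $h\ge 1$ I want $\Z$ of size $s=O(V/h)$ together with an assignment $v\mapsto z_v\in\Z$ such that $\hd(v,z_v)\le h$ for every $v\in\V$. A concrete construction: in each connected component take a spanning tree, walk an Euler tour (length $\le 2V$), and mark every $h$-th vertex along the tour. Each tour position is within $h$ steps of a marker, each tour-step is a graph edge, so every vertex lies within $h$ hops of some $z_v\in\Z$, and $|\Z|=O(V/h)$. Since this construction depends only on $(\V,\E)$, it uses no privacy budget, and $v\mapsto z_v$ can be fixed in advance.

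Next I privately release the $\binom{s}{2}$ hub distances $\{d_w(z,z'):z,z'\in\Z\}$, each of which has $\ell_1$ sensitivity $1$. Applying the Laplace mechanism (Lemma~\ref{lemma:LapMech}) coordinate-wise and composing yields $\epsilon$-DP with per-coordinate noise $\Lap(s^2/\epsilon)$ via basic composition (Lemma~\ref{lemma:basic_composition}) and $(\epsilon,\delta)$-DP with noise $\Lap(O(s\sqrt{\log(1/\delta)}/\epsilon))$ via advanced composition (Lemma~\ref{lemma:adv_composition}). A union bound plus the Laplace tail guarantees, with probability $1-\gamma$, a uniform magnitude $N$ on all noise terms, with $N=\tilde O(s^2/\epsilon)$ in the pure case and $N=\tilde O(s\sqrt{\log(1/\delta)}/\epsilon)$ in the approximate case, absorbing a $\log(V/\gamma)$ factor. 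For any query $(u,v)$ in a common component I output $\hat d(z_u,z_v)$; since every edge has weight at most $M$ and $\hd(u,z_u),\hd(v,z_v)\le h$, the triangle inequality gives $|d_w(u,v)-\hat d(z_u,z_v)|\le d_w(u,z_u)+d_w(v,z_v)+|\text{noise}|\le 2hM+N$.

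It remains to pick $h$. For $\B$ (pure DP) I balance $s^2/\epsilon=V^2/(h^2\epsilon)$ against $hM$, giving $h\asymp (V^2/(M\epsilon))^{1/3}$ and per-query error $O((VM)^{2/3}\epsilon^{-1/3}\log(VM\epsilon/\gamma))$. For $\A$ I balance $(V/h)\sqrt{\log(1/\delta)}/\epsilon$ against $hM$ to choose $h\asymp (V\sqrt{\log(1/\delta)}/(M\epsilon))^{1/2}$, giving per-query error $\tilde O(\sqrt{VM\log(1/\delta)/\epsilon})$. The hypotheses $1/V<M\epsilon<V$ (resp.\ $V^2$) are precisely what ensure the optimal $h$ lies in $[1,V]$, so that $\Z$ is neither trivial nor all of $\V$. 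The main technical obstacle is the hop-covering construction: it must be produced deterministically from the public topology, must work for any graph, and must achieve the $O(V/h)$ size bound so that the noise/bias tradeoff goes through; the Euler-tour construction above handles this uniformly on arbitrary topologies, and alternative greedy or BFS-based $h$-dominating-set constructions would give the same guarantee.
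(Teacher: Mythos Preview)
Your proposal is correct and follows essentially the same approach as the paper: construct a $k$-covering $\Z$ of size $O(V/k)$ from the public topology, release all $|\Z|^2$ pairwise hub distances via the Laplace mechanism with basic or advanced composition, answer each query by the noisy distance between nearby hubs, and balance $k$ against the noise scale. The only substantive difference is the covering construction---the paper cites the Meir--Moon argument (partition the vertices of a spanning tree by their depth modulo $k{+}1$ and take the smallest class), whereas you mark every $h$-th vertex along an Euler tour; both yield a $k$-covering of size $O(V/k)$, so the downstream analysis and final bounds are identical.
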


\begin{figure*}[tb]
\begin{algorithm}[Bounded-weight distances]
\label{alg:bw-dists}
\noindent
\Inputs{$\G=(\V,\E)$, $w:\E\to\mathbb{R}^+$, $k,\Lambda,\gamma,\epsilon'>0$, $k$-covering $\Z$.}
\begin{enumerate}
\item For all $y,z\in\Z$, sample $X_{y,z} \leftarrow\Lap(|\Z|/\epsilon')$ and output $a_{y,z} := d_w(y,z) + X_{y,z}$. 
\item For $v\in V$, let $z(v)\in\Z$ denote a vertex in $\Z$ with $\hd(v,z(v)) \leq k$.
\item The approximate distance between vertices $u,v\in V$ is given by $a_{z(u), z(v)}$. 
\end{enumerate}
\end{algorithm}
\end{figure*}

To achieve this result, we will find a small subset $\Z$ of $\V$ such that every vertex $v\in \V$ is near
some vertex $z\in \Z$. We will need the following definition, introduced in \cite{MM75}.

\begin{defn}
A subset $\Z\subset \V$ of vertices is a \textbf{$k$-covering} if for every $v\in \V$ there is some 
$z\in \Z$ such that $\hd(v,z) \leq k$, where $\hd$ is the hop-distance. 
\end{defn}

A $k$-covering is sometimes called a $k$-dominating set (e.g.~in \cite{CN82}). 
The following lemma shows that we can find a sufficiently small $k$-covering for any graph $\G$. 


\begin{lemma}\cite{MM75}
If $V\geq k+1$, then $\G$ has a $k$-covering of size at most $\lfloor V / (k+1) \rfloor$.
\label{lemma:kcoveringsize}
\end{lemma}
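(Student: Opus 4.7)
The plan is to reduce to the case of trees and then proceed by induction on $V$. If $\G$ is connected (which is implicit in the context of finite distances), any spanning tree $\T$ of $\G$ satisfies $\hd_\T(u,v) \geq \hd_\G(u,v)$ for every $u,v$, so a $k$-covering of $\T$ is automatically a $k$-covering of $\G$. Hence it is enough to establish the bound for trees (and if $\G$ is disconnected one would apply the lemma component by component within components of size $\geq k+1$).

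For a tree $\T$ on $V$ vertices I would induct on $V$, but I would widen the base case so that it covers the full range $k+1 \leq V \leq 2k+1$ rather than just $V = k+1$. In this range the diameter of $\T$ is at most $V-1 \leq 2k$, so a center vertex (a vertex of minimum eccentricity) has eccentricity at most $k$ and $k$-covers the whole tree; this gives a covering of size $1 \leq \lfloor V/(k+1) \rfloor$, as required.

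For the inductive step with $V \geq 2k+2$, I would root $\T$ at an arbitrary vertex, let $\ell$ be a deepest leaf, and let $v$ be the ancestor of $\ell$ exactly $k$ edges above it. Because $\ell$ was chosen deepest, the subtree $\T_v$ rooted at $v$ has depth at most $k$, so $v$ covers every vertex of $\T_v$; moreover, the $k+1$ vertices on the path from $\ell$ up to $v$ already give $|\T_v| \geq k+1$. I would add $v$ to the covering, delete $\T_v$, and recurse on the remaining tree $\T'$ with $V' = V - |\T_v|$ vertices. The analysis splits into three subcases: if $V' = 0$, the covering has size $1$; if $V' \geq k+1$, the inductive hypothesis gives a covering of $\T'$ of size at most $\lfloor V'/(k+1)\rfloor \leq \lfloor V/(k+1)\rfloor - 1$ (using $|\T_v|\geq k+1$), for a total of $\lfloor V/(k+1)\rfloor$; and if $0 < V' \leq k$, then $\T'$ has diameter at most $k-1$ and is covered by a single vertex, for a total of $2$, which is at most $\lfloor V/(k+1)\rfloor$ because $V \geq 2(k+1)$.

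The main obstacle is the last subcase: the remainder $\T'$ can have strictly fewer than $k+1$ vertices, a regime in which the inductive hypothesis as stated gives no bound. Widening the base case from $V = k+1$ all the way up to $V = 2k+1$ is exactly what resolves this, because it forces the inductive step to be invoked only when $V \geq 2(k+1)$, and this provides the slack $\lfloor V/(k+1)\rfloor \geq 2$ that absorbs the single extra vertex used to cover the small leftover tree.
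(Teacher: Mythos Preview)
Your reduction to spanning trees matches the paper's first move, but after that the two arguments diverge. The paper (following \cite{MM75}) gives a short pigeonhole proof: take $x$ to be an endpoint of a longest path in the spanning tree $\T$, partition $\V$ into the $k+1$ layers $\Z_i = \{v : \text{distance from $x$ to $v$ in $\T$} \equiv i \pmod{k+1}\}$, observe that each $\Z_i$ is a $k$-covering (this is where the choice of $x$ as an end of a diameter is used), and take the smallest layer. Your inductive peeling argument is more hands-on and yields an explicit greedy construction; the paper's route is slicker but leaves the key claim that each $\Z_i$ is a $k$-covering to the reader.

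There is one gap in your inductive step. You assume the deepest leaf $\ell$ has an ancestor exactly $k$ edges above it, but nothing in the hypothesis $V \geq 2k+2$ forces the rooted tree to have depth at least $k$: a star on $2k+2$ vertices has depth $1$ from the center or $2$ from a leaf. The fix is immediate---if the depth is less than $k$ then the root alone is a $k$-covering of size $1 \leq \lfloor V/(k+1)\rfloor$---but you should state this as a separate sub-case before attempting to walk $k$ edges up from $\ell$.
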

\begin{proof}
Consider any spanning tree $\T$ of $\G$ and any vertex $x\in \V$ that is an endpoint of one of the
longest paths in $\T$. For $0\leq i\leq k$, 
let $\Z_i$ be the subset of $\V$ consisting of vertices whose distance from $x$ in $\T$ is
congruent to $i$ modulo $k+1$.
It can be shown that each $\Z_i$ is a $k$-covering of $\T$ and therefore of $\G$.
But the $k+1$ sets $\Z_i$ form a partition of $\V$. Therefore some $\Z_i$ has 
$|\Z_i| \leq  \lfloor V / (k+1) \rfloor$, as desired.
\end{proof}




\begin{theorem}
For any $\G=(\V,\E), k,\delta>0$, and $\gamma,\epsilon\in(0,1)$,
let $\Z$ be a $k$-covering of $\G$, and let $Z=|\Z|$. Then
there is an algorithm $\A$ which is $(\epsilon,\delta)$-differentially private on $\G$
such that for any edge weight function $w:\E\to [0,\Lambda]$,
with probability $1-\gamma$, $\A(w)$ releases 
all-pairs distances with approximation error 
$$O\left(k\Lambda+Z\epsilon^{-1}\log(Z/\gamma)\sqrt{\log(1/\delta)}\right)$$ 
per distance.

\label{thm:bddwt_covering_advanced}
\end{theorem}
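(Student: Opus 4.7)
The plan is to use the $k$-covering $\Z$ to reduce the all-pairs distance problem on $\V$ to releasing only the $\binom{Z}{2}$ pairwise distances within $\Z$, and then to estimate $d_w(u,v)$ by the noisy distance $a_{z(u),z(v)}$ between the covering vertices of $u$ and $v$. This is exactly what Algorithm \ref{alg:bw-dists} does. Because $\Z$ is a $k$-covering and every edge has weight at most $\Lambda$, each vertex $v$ lies within weighted distance $k\Lambda$ of $z(v)$, so substituting covering vertices costs at most $2k\Lambda$ in accuracy; the remaining task is to release the $Z^2$ pairwise distances in $\Z$ privately with small error.

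For privacy, I would first observe that each pairwise distance $d_w(y,z)$ has global sensitivity $1$: if $w\sim w'$, then for any fixed path $P$ the weight $w(P)$ changes by at most $\|w-w'\|_1 \le 1$, and so $|d_w(y,z) - d_{w'}(y,z)| \le 1$. The algorithm releases $Z^2$ such queries, each with independent $\Lap(Z/\epsilon')$ noise, so each release is $(\epsilon'/Z, 0)$-differentially private. Applying the advanced composition bound (Lemma \ref{lemma:adv_composition}) with $k = Z^2$ copies of an $(\epsilon'/Z,0)$-mechanism and choosing $\epsilon' = \Theta(\epsilon/\sqrt{\ln(1/\delta)})$ makes the composed mechanism $(\epsilon,\delta)$-differentially private; the final outputs $a_{z(u),z(v)}$ are post-processings of the released vector and so inherit the guarantee.

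For accuracy, set $b = Z\sqrt{\ln(1/\delta)}/\epsilon = \Theta(Z/\epsilon')$ and use the single-variable Laplace tail $\Pr[|Y| > tb] = e^{-t}$ together with a union bound over the $Z^2$ noise terms. Taking $t = \ln(Z^2/\gamma)$ gives that with probability at least $1-\gamma$, every noisy release satisfies
\[
|a_{y,z} - d_w(y,z)| \;\le\; O\!\left(Z\epsilon^{-1}\log(Z/\gamma)\sqrt{\log(1/\delta)}\right).
\]
Combining with the triangle inequality
\[
|d_w(u,v) - d_w(z(u),z(v))| \;\le\; d_w(u,z(u)) + d_w(v,z(v)) \;\le\; 2k\Lambda,
\]
where the last step uses $\hd(u,z(u)) \le k$ and $w(e)\le \Lambda$ on the hop-$k$ path, yields the claimed per-pair error $O(k\Lambda + Z\epsilon^{-1}\log(Z/\gamma)\sqrt{\log(1/\delta)})$.

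The main obstacle is the composition bookkeeping: one must verify that the regime $\epsilon' \le 1/Z$ (equivalently $k \le 1/\epsilon'^2$ in Lemma \ref{lemma:adv_composition}) actually holds for the claimed parameter setting so that the $\sqrt{k \ln(1/\delta')}\cdot \epsilon$ term dominates the $k\epsilon(e^\epsilon - 1)$ term, and to confirm that the resulting noise scale really is $\Theta(Z\sqrt{\log(1/\delta)}/\epsilon)$. Once this is in place, the rest of the argument is a routine union bound on Laplace tails together with two applications of the triangle inequality.
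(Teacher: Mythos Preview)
Your proposal is correct and follows essentially the same route as the paper: release the $Z^2$ pairwise distances within the $k$-covering $\Z$ with $\Lap(Z/\epsilon')$ noise, invoke advanced composition (Lemma~\ref{lemma:adv_composition}) with $\epsilon' = \Theta(\epsilon/\sqrt{\ln(1/\delta)})$ for privacy, union-bound the Laplace tails for accuracy, and close with the triangle inequality $|d_w(u,v)-d_w(z(u),z(v))|\le 2k\Lambda$. One small bookkeeping remark: the regime condition you flag should read $\epsilon'\le 1$ rather than $\epsilon'\le 1/Z$, since the per-mechanism parameter is $\epsilon'/Z$ and the condition $k\le 1/\epsilon^2$ of Lemma~\ref{lemma:adv_composition} becomes $Z^2 \le (Z/\epsilon')^2$; this is automatically satisfied because $\epsilon<1$ by hypothesis.
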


\begin{proof}
There are $Z^2$ pairwise distances between vertices in $\Z$. 
We can compute and release noisy versions of each of these distances, 
adding $\Lap(Z/\epsilon')$ noise to each. For any $\gamma\in(0,1)$,
with probability $1-\gamma$ we have that each of these $Z^2$ noise
variables has magnitude at most $(Z/\epsilon')\log(Z^2/\gamma)$.
By Lemma \ref{lemma:adv_composition}, for any $\delta>0$ releasing
this is $(\epsilon,\delta)$-differentially private for $\epsilon'=O(\epsilon/ \sqrt{\ln\,1/\delta}\,)$.

But this information allows the recipient to compute approximate distances between any pair of vertices
$x,y\in \V$, as follows. Since $\Z$ is a $k$-covering of $\G$, we can find $z_x,z_y\in \Z$
which are at most $k$ vertices from $x$ and $y$. 
Since the maximum weight is $\Lambda$, the weight of the shortest path 
between $x$ and $z_x$ is at most $k\Lambda$, and similarly for $y$ and $z_y$.
Consequently 
$$\left| d_w(x,y) - d_w(z_x,z_y)\right| \leq 2 k\Lambda.$$
But we have released an estimate of $d_w(z_x,z_y)$ with noise distributed 
according to $\Lap(Z/\epsilon)$. Consequently with probability $1-\gamma$
each of these estimates differs from $d_w(z_x,z_y)$ by at most $(Z/\epsilon)\log(Z^2/\gamma)$. 
\end{proof}

We obtain a slightly weaker result under pure differential privacy. 


\begin{theorem}
Let $\G=(\V,\E)$.
For any $k>0$ and $\gamma\in(0,1)$, if $\Z$ is a 
$k$-covering of $\G$ of size $Z$, then there is an algorithm $\A$ that is 
$\epsilon$-differentially private on $\G$ 
such that for any $w:\E\to [0,\Lambda]$,
with probability $1-\gamma$, $\A(w)$ releases all-pairs
distances with approximation error 
$$O(k\Lambda+Z^2\epsilon^{-1}\log(Z/\gamma))$$
per distance.
\label{thm:bddwt_covering_pure}
\end{theorem}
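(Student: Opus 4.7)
The plan is to mimic the proof of Theorem \ref{thm:bddwt_covering_advanced} almost verbatim, replacing the use of advanced composition (Lemma \ref{lemma:adv_composition}) with basic composition (Lemma \ref{lemma:basic_composition}), which is where the worsened $Z^2$ factor (rather than $Z$) comes from. Concretely, I would run Algorithm \ref{alg:bw-dists} but with the Laplace scale tuned for pure differential privacy.

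First, the algorithm computes the $Z^2$ pairwise distances $d_w(y,z)$ for $y,z \in \Z$ and releases noisy answers $a_{y,z} = d_w(y,z) + X_{y,z}$, where each $X_{y,z}$ is drawn independently from $\Lap(Z^2/\epsilon)$. Each individual distance query has sensitivity $1$ (since for neighboring $w \sim w'$ we have $|w(P) - w'(P)| \le \|w-w'\|_1 \le 1$ for every path $P$, hence $|d_w(y,z) - d_{w'}(y,z)| \le 1$), so each noisy release is $\epsilon/Z^2$-differentially private by Lemma \ref{lemma:LapMech}. Basic composition (Lemma \ref{lemma:basic_composition}) across the $Z^2$ queries then yields $\epsilon$-differential privacy overall; equivalently, the vector of all pairwise distances has $\ell_1$ sensitivity at most $Z^2$, and we invoke the Laplace mechanism once.

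Second, I would use exactly the same post-processing and error analysis as in Theorem \ref{thm:bddwt_covering_advanced}. A union bound over the $Z^2$ Laplace variables shows that with probability at least $1-\gamma$ each noise magnitude is at most $(Z^2/\epsilon)\log(Z^2/\gamma) = O(Z^2 \epsilon^{-1} \log(Z/\gamma))$. To answer a query for an arbitrary pair $x, y \in \V$, the algorithm selects $z_x, z_y \in \Z$ with $\hd(x,z_x), \hd(y, z_y) \le k$ (guaranteed by the $k$-covering property) and returns $a_{z_x, z_y}$. Since edge weights lie in $[0, \Lambda]$, the triangle inequality gives $|d_w(x,y) - d_w(z_x, z_y)| \le 2k\Lambda$, and combining with the noise bound produces total error $O(k\Lambda + Z^2 \epsilon^{-1} \log(Z/\gamma))$ as required, by post-processing of differential privacy.

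There is essentially no hard step here; the only thing to be careful about is confirming the sensitivity claim for the entire vector of pairwise distances and noting that basic composition is the right tool precisely because we cannot afford the $\delta > 0$ slack of advanced composition. The $k\Lambda$ term in the error is independent of the privacy mechanism and is inherited entirely from the geometric approximation via the $k$-covering, exactly as in the $(\epsilon,\delta)$ case.
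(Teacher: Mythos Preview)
Your proposal is correct and follows essentially the same argument as the paper: release the $Z^2$ pairwise distances among $\Z$ with Laplace noise scaled by $Z^2/\epsilon$, invoke basic composition (Lemma~\ref{lemma:basic_composition}) for $\epsilon$-differential privacy, union-bound the noise magnitudes, and then use the $k$-covering plus the weight bound $\Lambda$ to control $|d_w(x,y)-d_w(z_x,z_y)|$ by $2k\Lambda$. Your write-up is in fact slightly more careful than the paper's in justifying the sensitivity-$1$ claim for each distance query.
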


\begin{proof}
There are at most $Z^2$ pairwise distances between vertices in $\Z$, so
we can release approximations of each distance, adding $\Lap(Z^2/\epsilon)$ noise to each distance.
With probability $1-\gamma$ each of these $Z^2$ noise variables has magnitude at most $(Z^2/\epsilon)\log(Z^2/\gamma)$. 
By Lemma \ref{lemma:basic_composition}, releasing these distances is $\epsilon$-differentially private.
As above, since $\Z$ is a $k$-covering of $\G$, we can find $z_x, z_y\in \Z$ which are at most
$k$ vertices from $x$ and $y$. 
Consequently $d_w(x,z_x)\leq k\Lambda$ and $d_w(y,z_y) \leq k\Lambda$,
so 
$$\left|d_w(x,y) - d_w(z_x,z_y)\right| \leq 2 k\Lambda .$$
But the released estimate for $d_w(z_x,z_y)$ has noise distributed according to $\Lap(k\Lambda)$,
so with probability $1-\gamma$ each of these estimates differs from $d_w(x,y)$ by
at most $O(Z^2\epsilon^{-1}\log(Z^2/\gamma))$, as desired. 
\end{proof}

We now conclude the proof of Theorem \ref{thm:apsp_bounded_wts}.

\begin{proof}[Proof of Theorem \ref{thm:apsp_bounded_wts}]
Combine Lemma \ref{lemma:kcoveringsize}
with Theorem \ref{thm:bddwt_covering_advanced} for $k=\lfloor \sqrt{V/(\Lambda\epsilon)}\,\rfloor$, and
combine Lemma \ref{lemma:kcoveringsize} with Theorem \ref{thm:bddwt_covering_pure} for 
$k=\lfloor V^{2/3}/(\Lambda\epsilon)^{1/3}\rfloor$. 
\end{proof}

Note that if we are only interested in the $V-1$ distances from a single source, then directly
releasing noisy distances and applying Lemma \ref{lemma:adv_composition} yields 
$(\epsilon,\delta)$-differential privacy
with error distributed
according to $\Lap(b)$ for $b=O(\sqrt{V\log 1/\delta}\,)/\epsilon$,
which has the same dependence on $V$ as the bound 
provided by the theorem for releasing all pairs distances.

For some graphs we may be able to find a smaller $k$-covering than
that guaranteed by Lemma \ref{lemma:kcoveringsize}.
Then we can use Theorems \ref{thm:bddwt_covering_advanced} and \ref{thm:bddwt_covering_pure}
to obtain all-pairs distances with lower error. 
For instance, we have the following.

\begin{theorem}
Let $\G$ be the $\sqrt{V} \times \sqrt{V}$ grid.
Then for any $\epsilon,\gamma\in(0,1)$ and $\delta>0$, we can release with probability $1-\gamma$ all-pairs distances 
with additive approximation error 
$$V^{1/3} \cdot O\left(\Lambda + \epsilon^{-1}\log (V/\gamma)\sqrt{\log 1/\delta}\,\right)$$
while satisfying
$( \epsilon,\delta)$-differential privacy.
\end{theorem}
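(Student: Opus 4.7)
The plan is to invoke Theorem~\ref{thm:bddwt_covering_advanced} with a $k$-covering of the grid that is substantially smaller than the generic bound of Lemma~\ref{lemma:kcoveringsize}. The key geometric observation is that on the $\sqrt{V}\times\sqrt{V}$ grid, hop distance is Manhattan distance, so the ball of hop-radius $k$ around any vertex contains $\Theta(k^2)$ vertices, not just $\Theta(k)$. Consequently we can cover the grid with far fewer vertices than on a general graph.

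First I would construct an explicit $k$-covering by placing a regular sub-lattice of covering vertices at grid positions $(ak,bk)$ for integer $a,b$ with $0\leq ak,bk < \sqrt{V}$. For any vertex $(x,y)$, rounding each coordinate to the nearest multiple of $k$ changes it by at most $k/2$ in each dimension, so the hop (Manhattan) distance to the nearest sub-lattice point is at most $k$. This produces a $k$-covering $\Z$ of size $Z = O(V/k^2)$.

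Next I would plug this covering into Theorem~\ref{thm:bddwt_covering_advanced}, which gives $(\epsilon,\delta)$-differentially private all-pairs distances with additive error
\[
O\!\left(k\Lambda + (V/k^2)\,\epsilon^{-1}\log(V/(k^2\gamma))\sqrt{\log(1/\delta)}\right)
\]
per query, with probability $1-\gamma$. To balance the two terms, set $k=\lfloor V^{1/3}\rfloor$, so that $Z = O(V^{1/3})$. Then $k\Lambda = V^{1/3}\Lambda$ and $(V/k^2)\epsilon^{-1}\log(Z/\gamma)\sqrt{\log(1/\delta)} = V^{1/3}\epsilon^{-1}\log(V/\gamma)\sqrt{\log(1/\delta)}$ (absorbing constants into the logarithm), giving the claimed bound $V^{1/3}\cdot O(\Lambda + \epsilon^{-1}\log(V/\gamma)\sqrt{\log(1/\delta)})$.

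There is no real obstacle here: the only thing to verify is the sub-lattice covering argument, which is immediate once one notices that the hop metric on the grid is Manhattan distance. The theorem is essentially a direct corollary of Theorem~\ref{thm:bddwt_covering_advanced} combined with a better-than-generic covering construction tailored to the grid geometry.
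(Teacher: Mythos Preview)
Your approach is essentially identical to the paper's: both construct a sub-lattice of spacing $V^{1/3}$ to obtain a $\Theta(V^{1/3})$-covering of size $O(V^{1/3})$ and then invoke Theorem~\ref{thm:bddwt_covering_advanced}. One minor slip: near the grid boundary the nearest sub-lattice point may be up to $k-1$ (not $k/2$) away in a coordinate, so the covering radius is $2k$ rather than $k$ --- exactly as the paper states --- but this constant is absorbed into the big-$O$ and does not affect the result.
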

\begin{proof}
Let $\V=[\sqrt{V}]\times[\sqrt{V}]$, and let $\Z\subset \V$ consist of vertices $(i,j)\in \V$ with $i,j$ both one less than 
a multiple of $V^{1/3}$.
Then $\Z$ is a $2V^{1/3}$-covering of $\G$, and also $|\Z| \leq V^{1/3}$.
Consequently Theorem \ref {thm:bddwt_covering_advanced} implies the desired conclusion.
\end{proof}

\section{Finding shortest paths}
\label{sec:DPSP}

\subsection{Lower bound}
\label{ssec:DPSPlower}

In this section we present a lower bound on the additive error with which we can privately release a short path between
a pair of vertices. The argument is based on a reduction from the problem of reconstructing a large fraction
of the entries of a database. We show that an adversary can use an algorithm which outputs a
short path in a graph to produce a vector with small Hamming distance to an input vector, which is impossible
under differential privacy. To that end, 
we exhibit a ``hard'' graph $\G=(\V,\E)$ and a family of weight functions, and provide a correspondence
between inputs $x\in\binset^n$ and weight functions $w:\E\to\binset$.  

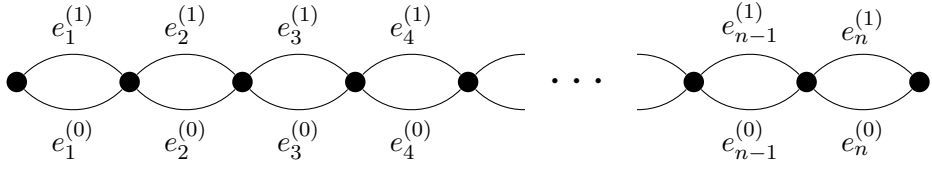
\begin{figure*}[t!]
\begin{center}
\begin{tikzpicture}[mn/.style={circle,fill=black, scale=.75}]

\draw (0,0) node[mn] (v0) {};
\draw (1.5,0) node[mn] (v1) {};
\draw (3,0) node[mn] (v2) {};
\draw (4.5,0) node[mn] (v3) {};
\draw (6,0) node[mn] (v4) {};

\draw (7.5,0) node[mn,fill=white] (v5) {};
\draw (7.5,0) node[mn,fill=white] (v5text) {\Huge{$\cdots$}};

\draw (9,0) node[mn] (v6) {};
\draw (10.5,0) node[mn] (v7) {};
\draw (12,0) node[mn] (v8) {};

\path
(v0) edge [bend right = 45] node [below] {$e_1^{(0)}$} (v1)
(v0) edge [bend left = 45] node [above] {$e_1^{(1)}$} (v1)

(v1) edge [bend right = 45] node [below] {$e_2^{(0)}$} (v2)
(v1) edge [bend left = 45] node [above] {$e_2^{(1)}$} (v2)

(v2) edge [bend right = 45] node [below] {$e_3^{(0)}$} (v3)
(v2) edge [bend left = 45] node [above] {$e_3^{(1)}$} (v3)

(v3) edge [bend right = 45] node [below] {$e_4^{(0)}$} (v4)
(v3) edge [bend left = 45] node [above] {$e_4^{(1)}$} (v4)

(v6) edge [bend right = 45] node [below] {$e_{n-1}^{(0)}$} (v7)
(v6) edge [bend left = 45] node [above] {$e_{n-1}^{(1)}$} (v7)

(v7) edge [bend right = 45] node [below] {$e_n^{(0)}$} (v8)
(v7) edge [bend left = 45] node [above] {$e_n^{(1)}$} (v8)
;

\begin{scope}
\clip (6,-0.5) rectangle (6.75,0.5);
\path
(v4) edge [bend right = 45] node {} (v5)
(v4) edge [bend left = 45] node {} (v5);
\end{scope}

\begin{scope}
\clip (8.25,-0.5) rectangle (9,0.5);
\path
(v5) edge [bend right = 45] node {} (v6)
(v5) edge [bend left = 45] node {} (v6);
\end{scope}

\end{tikzpicture}
\end{center}
\caption{The graph used for the lower bound of Lemma \ref{lemma:SP-reduction}.}
\label{fig:SP-lower-bound}
\end{figure*}

\begin{theorem}
There exists a graph $\G=(\V,\E)$ and vertices $s,t\in \V$ such that 
for any algorithm $\A$ that is $(\epsilon,\delta)$-differentially private on $\G$, there exist edge weights 
$w:\E\to\binset$ for which
the expected approximation error of the path $\A(w)$ from $s$ to $t$ is at least 
$\alpha=(V-1)\cdot\left( \frac{1-(1+e^{\epsilon})\delta}{1+e^{2\epsilon}}\right)$.
In particular, for sufficiently small $\epsilon$ and $\delta$, $\alpha \geq 0.49 (V-1)$. 
\label{thm:SP-lower-bound}
\end{theorem}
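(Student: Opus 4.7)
The plan is to reduce reconstruction of a binary string $x \in \binset^n$, with $n = V-1$, to the problem of releasing a short $s$-$t$ path in the hard graph of Figure~\ref{fig:SP-lower-bound}. That graph is a chain $v_0, v_1, \ldots, v_n$ in which each consecutive pair $v_{i-1}, v_i$ is joined by two parallel edges $e_i^{(0)}, e_i^{(1)}$; take $s = v_0$ and $t = v_n$. For each $x \in \binset^n$ I will define $w_x$ by $w_x(e_i^{(x_i)}) = 0$ and $w_x(e_i^{(1-x_i)}) = 1$. Every $s$-$t$ path then corresponds to a unique string $y \in \binset^n$ recording which parallel edge it uses at each step, the shortest path under $w_x$ has weight zero (namely $y = x$), and the approximation error of the algorithm's output is exactly the Hamming distance between $x$ and $y$.

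For any pair of strings $x, x^{(i)}$ differing in a single coordinate $i$, the weight functions $w_x, w_{x^{(i)}}$ satisfy $\|w_x - w_{x^{(i)}}\|_1 = 2$, so a standard group-privacy argument (iterating the definition of $(\epsilon,\delta)$-differential privacy once) gives
\[
\Pr[\A(w_x) \in S] \leq e^{2\epsilon}\,\Pr[\A(w_{x^{(i)}}) \in S] + (1+e^{\epsilon})\delta
\]
for every event $S$. I will apply this to the correctness event $\{y : y_i = x_i\}$ in the direction $w_x \to w_{x^{(i)}}$, and symmetrically to $\{y : y_i = x_i^{(i)}\}$ in the other direction.

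The core step is a per-pair two-point optimization. Fix $i$ and a pair $(x, x^{(i)})$, and let $u = \Pr[\A(w_x)_i \neq x_i]$ and $v = \Pr[\A(w_{x^{(i)}})_i \neq x_i^{(i)}]$ be the coordinate-$i$ error probabilities. The two group-privacy inequalities above translate directly into
\[
u \geq 1 - e^{2\epsilon}\, v - (1+e^{\epsilon})\delta, \qquad v \geq 1 - e^{2\epsilon}\, u - (1+e^{\epsilon})\delta.
\]
Summing these and rearranging gives
\[
u + v \;\geq\; \frac{2\bigl(1 - (1+e^{\epsilon})\delta\bigr)}{1 + e^{2\epsilon}},
\]
so the coordinate-$i$ error probability averaged over the pair is at least $\bigl(1 - (1+e^{\epsilon})\delta\bigr)/(1+e^{2\epsilon})$.

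Finally, I will draw $x$ uniformly from $\binset^n$. By linearity, the expected error of $\A$ on $w_x$ equals $\sum_{i=1}^{n} \Pr[\A(w_x)_i \neq x_i]$; pairing $x$ with $x^{(i)}$ inside each coordinate-$i$ expectation and applying the per-pair bound yields
\[
\mathbb{E}_{x \sim \binset^n}\bigl[\mathbb{E}_\A[\mathrm{error}]\bigr] \;\geq\; n \cdot \frac{1 - (1+e^{\epsilon})\delta}{1 + e^{2\epsilon}} \;=\; (V-1)\cdot\frac{1 - (1+e^{\epsilon})\delta}{1 + e^{2\epsilon}} \;=\; \alpha,
\]
so by an averaging argument some particular $x$ witnesses expected error at least $\alpha$ against $\A$. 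The main obstacle is getting the constants tight: one must use group privacy in \emph{both} directions between $w_x$ and $w_{x^{(i)}}$ so that the two resulting lower bounds on $u$ and $v$ can be added and collapsed, producing the denominator $1 + e^{2\epsilon}$ and hence the factor $\tfrac{1}{2}$ that drives the $0.49(V-1)$ statement for small $\epsilon, \delta$.
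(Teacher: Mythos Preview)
Your proposal is correct and follows essentially the same approach as the paper. The paper factors the argument through an intermediate reconstruction algorithm $\B:\{0,1\}^n\to\{0,1\}^n$ (Lemma~\ref{lemma:SP-reduction}) and then applies generic randomized-response lower bounds (Lemmas~\ref{lemma:reidentification-projection} and~\ref{lemma:reidentification-Hamming}) to $\B$ with parameters $(2\epsilon,(1+e^{\epsilon})\delta)$, whereas you inline all of this by applying group privacy directly on $\A$ and doing the two-point per-coordinate calculation on the spot; the underlying computation and constants are identical.
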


Consequently, any differentially private algorithm for approximate shortest paths must on some inputs have
additive error proportional to the number of vertices. 
Let $\G=(\V,\E)$ be the $(n+1)$-vertex graph with vertex set $\V=\{0,\ldots,n\}$
and two  parallel edges $e_i^{(0)}$ and $e_i^{(1)}$ between
each pair of consecutive vertices $i-1, i$, as shown in Figure \ref{fig:SP-lower-bound}.
(For simplicity we have defined this is a multigraph, but it can be transformed into a simple graph with the 
addition of $n$ extra vertices, changing the bound obtained by a factor of $2$.)

\begin{lemma}
\label{lemma:SP-reduction}
Let $\G=(\V,\E)$ be the graph defined in the previous paragraph. 
For any $\alpha$, let $\A$ be an algorithm
that is $(\epsilon,\delta)$-differentially private on $\G$ 
that on input edge weights $w: \E\to \{0,1\}$
produces a path from vertex $s=0$ to vertex $t=n$
with expected approximation error at most $\alpha$. 
Then there exists a
$(2\epsilon, (1+e^{\epsilon})\delta)$-differentially private algorithm $\B$ which on input $x\in\{0,1\}^n$ produces
$y\in \{0,1\}^n$ such that the expected Hamming distance $d_H(x,y)$ is at most $\alpha$.
\end{lemma}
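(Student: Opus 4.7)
\textbf{Proof plan for Lemma \ref{lemma:SP-reduction}.}

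The plan is to exhibit a reduction mapping each database $x\in\binset^n$ to a weight function $w_x:\E\to\binset$ on the parallel-edges graph $\G$ in such a way that (i) the shortest $s$-$t$ path has weight zero and encodes $x$ bit-by-bit, and (ii) Hamming-neighboring databases map to weight functions that are at $\ell_1$ distance exactly $2$ in weight-space. Algorithm $\B$ will then be the post-processing that runs $\A$ on $w_x$ and reads the bits $y_i$ off the edges chosen by the returned path; group privacy for $\ell_1$-distance $2$ will convert $(\epsilon,\delta)$-DP of $\A$ into $(2\epsilon,(1+e^{\epsilon})\delta)$-DP of $\B$.

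Concretely, first I would define $w_x(e_i^{(b)})=0$ if $b=x_i$ and $w_x(e_i^{(b)})=1$ otherwise, so that the path which picks $e_i^{(x_i)}$ at every step has weight $0$; hence $d_{w_x}(0,n)=0$. Algorithm $\B$ on input $x$ forms $w_x$, runs $\A(w_x)$ to obtain a path $P$ from $0$ to $n$ (WLOG simple, since a simple sub-path is no longer), and for each $i$ reads off the bit $y_i\in\binset$ for which $e_i^{(y_i)}\in P$. Then
\[
w_x(P)\;=\;\sum_{i=1}^n w_x(e_i^{(y_i)})\;=\;\sum_{i=1}^n [y_i\neq x_i]\;=\;d_H(x,y),
\]
and since $d_{w_x}(0,n)=0$ the approximation error of $P$ equals $d_H(x,y)$ exactly. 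Taking expectations yields $\EE[d_H(x,\B(x))]\leq \alpha$, which is the required utility.

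For the privacy part, observe that if $x$ and $x'$ are Hamming-neighbors differing in coordinate $i$, then $w_x$ and $w_{x'}$ agree off the two edges $e_i^{(0)},e_i^{(1)}$ and each of those two weights flips between $0$ and $1$, so $\|w_x-w_{x'}\|_1=2$. I would interpolate via the weight function $w''$ that equals $w_x$ except that $w''(e_i^{(1-x_i)})=0$; then $w_x\sim w''\sim w_{x'}$ in the neighboring relation of our model. Applying the $(\epsilon,\delta)$-DP guarantee of $\A$ twice in succession gives, for any measurable $S$,
\[
\Pr[\A(w_x)\in S]\;\leq\; e^{\epsilon}\Pr[\A(w'')\in S]+\delta\;\leq\; e^{2\epsilon}\Pr[\A(w_{x'})\in S]+(1+e^{\epsilon})\delta,
\]
and since $\B$ is a deterministic post-processing of $\A$, the same bound transfers to $\B(x)$ versus $\B(x')$, giving the claimed $(2\epsilon,(1+e^{\epsilon})\delta)$-differential privacy.

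The only mild obstacle I anticipate is the bookkeeping around paths that are not simple or that repeat vertices: a walk from $0$ to $n$ could in principle cross some edge $e_i^{(b)}$ multiple times, leaving $y_i$ ambiguous and potentially inflating the error relative to $d_H(x,y)$. I would handle this by first replacing $P$ with the $s$-$t$ simple path obtained from standard cycle-removal, which cannot increase $w_x(P)$ since weights are nonnegative; this preserves the error bound and makes $y$ well-defined. The rest of the argument is routine.
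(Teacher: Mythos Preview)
Your proposal is correct and follows essentially the same route as the paper: the same encoding $w_x$, the same two-step group-privacy bound yielding $(2\epsilon,(1+e^{\epsilon})\delta)$, and the same utility argument that the error of the returned path upper-bounds the Hamming distance. The only cosmetic differences are that you make the intermediate weight function $w''$ explicit and you pre-process $P$ into a simple path; the paper instead defines $y_i$ so that the inequality $d_H(x,y)\le w_x(P)$ holds directly without needing cycle removal.
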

\begin{proof}
Given an input $x\in \{0,1\}^n$, 
the corresponding edge weight function $w_x$ is given by $w_x(e_i^{(x_i)}) = 0$ and $w_x(e_i^{(1-x_i)}) = 1$. 
That is, for each pair of consecutive vertices $i-1, i$, one of the edges between them will have weight
0 and the other will have weight 1 as determined by the $i$th bit of the input $x$. 

The algorithm $\B$ is as follows. On input $x\in \{0,1\}^n$, apply $\A$ to $(\G,w_x)$, 
and let $\P$ be the path produced. 
Define $y\in \{0,1\}^n$ as follows. Let $y_i = 0$ if $e_i^{(0)}\in \P$ and $y_i=1$ otherwise. Output $y$.

We first show that this procedure is differentially private. Given neighboring inputs $x,x'\in \{0,1\}^n$
which differ only on a single coordinate $x_i\neq x_i'$, we have that the associated weight functions 
$w_x$ and $w_{x'}$ have $\ell_1$ distance 2, since they disagree only edges $e_i^{(0)}$ and $e_i^{(1)}$. Consequently,
since $\A$ is $(\epsilon,\delta)$-differentially private, we have that for any set of values $S$ in the range of $\A$,
\begin{align*}
\Pr[\A(w_x)\in S] &\leq e^{\epsilon} (e^{\epsilon} \Pr[\A(w_{x'})\in S] + \delta) + \delta\\
&= e^{2\epsilon} \Pr[\A(w_{x'})\in S] + (1+e^{\epsilon})\delta~.
\end{align*}
But algorithm $\B$ only accesses the database $x$ through $\A$. Consequently by the robustness of differential
privacy to post-processing, we have that $\B$ is $(2\epsilon, (1+e^{\epsilon})\delta)$-differentially private.

We now show that the expected number of coordinates in which $y$ disagrees with $x$ is at most $\alpha$.
The shortest path from $s$ to $t$ in $\G$ has length $0$, so the expected length of the path $\P$ produced by 
$\A$ is at most $\alpha$. Consequently in expectation $\P$ consists of at most $\alpha$ edges $e_i^{(1-x_i)}$. 
But $y_i \neq x_i$ only if  $e_i^{(1-x_i)} \in \P$, so it follows that the expected Hamming distance $d_H(x,y) \leq \alpha$.
\end{proof}

We will now prove two simple and standard lemmas concerning the limits of identifying rows of the input 
of a differentially private algorithm.  
In the first lemma, we show that a differentially private algorithm cannot release a particular
row of its input with probability greater than $\frac{e^{\epsilon}+\delta}{1+e^{\epsilon}}$.
In essence, for $\delta=0$ this can be interpreted as a statement about the optimality
of the technique of randomized response \cite{Warner65}.

\begin{lemma}
\label{lemma:reidentification-projection}
If algorithm $\B:\{0,1\}^n\to\{0,1\}$ is $(\epsilon,\delta)$-differentially private, then 
if we uniformly sample a random input $X\leftarrow U_n$, we have that for all $i$,
$$\Pr[\B(X) \neq X_i] \geq \frac{1 - \delta}{1+e^{\epsilon}}$$
\end{lemma}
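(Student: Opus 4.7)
The plan is to reduce to a two-point analysis by conditioning on all coordinates other than $i$. Fix any $z \in \binset^{n-1}$ and let $x, x' \in \binset^n$ be the two inputs agreeing with $z$ on all coordinates except $i$, with $x_i = 0$ and $x'_i = 1$. Since $x$ and $x'$ differ on a single bit, they are neighbors, so $(\epsilon,\delta)$-differential privacy gives the two inequalities
\begin{align*}
\Pr[\B(x) = 0] &\leq e^{\epsilon} \Pr[\B(x') = 0] + \delta, \\
\Pr[\B(x') = 1] &\leq e^{\epsilon} \Pr[\B(x) = 1] + \delta.
\end{align*}
Writing $p_0 = \Pr[\B(x) = 0]$ and $p_1 = \Pr[\B(x') = 1]$, and using $\Pr[\B(x')=0] = 1 - p_1$ and $\Pr[\B(x) = 1] = 1 - p_0$, these become $p_0 + e^{\epsilon} p_1 \leq e^{\epsilon} + \delta$ and $e^{\epsilon} p_0 + p_1 \leq e^{\epsilon} + \delta$.

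Adding the two inequalities yields $(1 + e^{\epsilon})(p_0 + p_1) \leq 2(e^{\epsilon} + \delta)$, so
\[
\frac{p_0 + p_1}{2} \leq \frac{e^{\epsilon} + \delta}{1 + e^{\epsilon}}.
\]
But the left-hand side is exactly the probability of correctly identifying $X_i$ conditioned on $X_{-i} = z$, since $X_i$ is uniform on $\{0,1\}$ independently of $X_{-i}$ and the two cases $X_i=0, X_i=1$ correspond respectively to $\B(x) = 0$ and $\B(x') = 1$.

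Averaging over $z$ (equivalently, over $X_{-i}$) preserves the bound, giving $\Pr[\B(X) = X_i] \leq (e^{\epsilon} + \delta)/(1 + e^{\epsilon})$, and therefore
\[
\Pr[\B(X) \neq X_i] \geq 1 - \frac{e^{\epsilon} + \delta}{1 + e^{\epsilon}} = \frac{1 - \delta}{1 + e^{\epsilon}},
\]
as claimed. There is no real obstacle here; the only nontrivial move is to notice that averaging the two DP inequalities kills the asymmetry between $p_0$ and $p_1$, which makes the bound symmetric and matches the symmetric left-hand side. The whole argument is a one-bit, two-neighbor instance of randomized response optimality.
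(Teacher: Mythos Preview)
Your proof is correct and follows essentially the same approach as the paper's: apply the differential privacy inequality in both directions on coordinate $i$ and combine. The only cosmetic differences are that you condition pointwise on $X_{-i}=z$ and average at the end, whereas the paper keeps $X_{-i}$ random throughout, and that you add the two inequalities to bound $p_0+p_1$ directly while the paper rewrites them as $\Pr[\B(X)=X_i]\le e^{\epsilon}\Pr[\B(X)\neq X_i]+\delta$ before solving; these are equivalent manipulations.
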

\begin{proof}
We have that $\Pr[\B(X) = X_i] $ is given by
\begin{align*}
&\phantom{==}\frac{1}{2} \Pr[\B(X_{-i},0) = 0] + \frac{1}{2} \Pr[\B(X_{-i},1) = 1]\\
&\leq \frac{e^\epsilon}{2}  \cdot (\Pr[\B(X_{-i},1) = 0] +  \Pr[\B(X_{-i},0) = 1]) + \delta\\
&= e^\epsilon \Pr[\B(X) \neq X_i] + \delta
\end{align*}
so since $\Pr[\B(X) = X_i] = 1 -  \Pr[\B(X) \neq X_i] $,
$$\Pr[\B(X) \neq X_i] \geq \frac{1 - \delta}{1+e^{\epsilon}}~.$$
\end{proof}

This immediately implies the following result.

\begin{lemma}
\label{lemma:reidentification-Hamming}
If algorithm $\B:\{0,1\}^n\to\{0,1\}^n$ is $(\epsilon,\delta)$-differentially private, then
for some $x\in \{0,1\}^n$ we have that
the expected Hamming distance $d_H(\B(x),x)$ 
is at least 
$ \frac{n(1 - \delta)}{1+e^{\epsilon}}$.
\end{lemma}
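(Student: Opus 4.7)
The plan is to deduce this bound from Lemma \ref{lemma:reidentification-projection} by an averaging argument over the coordinates and over a uniformly random input. The key observation is that if $\B$ is $(\epsilon,\delta)$-differentially private, then for each $i\in[n]$ the coordinate projection $\B_i(x) := \B(x)_i$ is also $(\epsilon,\delta)$-differentially private by closure under post-processing, and therefore Lemma \ref{lemma:reidentification-projection} applies to each $\B_i$.

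First I would sample $X \leftarrow U_n$ uniformly at random from $\{0,1\}^n$ and consider the expected Hamming distance $\EE_X[d_H(\B(X),X)]$. Writing the Hamming distance as a sum of indicators, linearity of expectation gives
\[
\EE_X[d_H(\B(X),X)] = \sum_{i=1}^n \Pr[\B_i(X) \neq X_i].
\]
Applying Lemma \ref{lemma:reidentification-projection} to each $\B_i$ individually bounds each summand below by $\frac{1-\delta}{1+e^\epsilon}$, so the total expectation is at least $\frac{n(1-\delta)}{1+e^\epsilon}$.

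Finally, since this quantity is the expectation of $d_H(\B(x),x)$ when $x$ is uniformly distributed over $\{0,1\}^n$, a standard averaging (probabilistic method) argument yields the existence of some particular $x \in \{0,1\}^n$ for which $\EE[d_H(\B(x),x)] \geq \frac{n(1-\delta)}{1+e^\epsilon}$, where the remaining expectation is taken over the internal randomness of $\B$. This is exactly the claimed bound. No step here is really an obstacle; the only subtlety worth spelling out is the distinction between the two expectations (over $X$ and over $\B$'s coins), which is handled cleanly by the averaging step.
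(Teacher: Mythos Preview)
Your proposal is correct and follows essentially the same approach as the paper: apply Lemma~\ref{lemma:reidentification-projection} coordinatewise to a uniformly random input, sum via linearity of expectation, and then average to extract a fixed $x$. The only difference is cosmetic---you spell out the post-processing justification for why each $\B_i$ is $(\epsilon,\delta)$-differentially private, which the paper leaves implicit.
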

\begin{proof}
By Lemma \ref{lemma:reidentification-projection}, projecting onto any coordinate $i$, the probability
that $\Pr[\B(X)_i \neq X_i]$ for uniformly random $X\leftarrow U_n$
is at least $\frac{1 - \delta}{1+e^{\epsilon}}$.
Consequently the expected number of coordinates on which
$\B(X)$ differs from $X$ is $\EE(d_H(\B(X),X)) \geq \frac{n(1 - \delta)}{1+e^{\epsilon}}$.
Since this holds for $X$ uniformly random, in particular we have that there exists
some $x\in \{0,1\}^n$ such that 
$$\EE(d_H(\B(x),x)) \geq \frac{n(1 - \delta)}{1+e^{\epsilon}}.$$
\end{proof}


We now conclude the proof of Theorem \ref{thm:SP-lower-bound}.

\begin{proof}[Proof of Theorem \ref{thm:SP-lower-bound}]
Assume that there is some algorithm which is $(\epsilon,\delta)$-differentially private on $\G$
and always produces 
a path of error at most $\alpha$ between $s$ and $t$.  Then by Lemma 
\ref{lemma:SP-reduction} we have that there is a
$(2\epsilon, (1+e^{\epsilon})\delta)$-differentially private algorithm $\B$ which for all $x\in\{0,1\}^n$ produces
$y\in \{0,1\}^n$ such that the expected Hamming distance $d_H(x,y)$ is less than $\alpha$.
But by Lemma \ref{lemma:reidentification-Hamming}, for some $x$
the expected Hamming distance $\EE(d_H(x,y)) \geq 
 \frac{n(1 - (1+e^{\epsilon})\delta)}{1+e^{2\epsilon}}=\alpha$, yielding a contradiction.
\end{proof}

\subsection{Upper bound}
\label{ssec:DPSPupper}

In this section we show that an extremely simple algorithm
matches the lower bound of the previous section
up to a logarithmic factor, for fixed $\epsilon,\delta$.
Consider a direct application of the Laplace mechanism (Lemma \ref{lemma:LapMech}), 
adding $\Lap(1/\epsilon)$ noise to each edge weight
and releasing the resulting values. With high probably all of these $E<V^2$ noise variables will be small,
providing a bound on the difference in the weight of any path between the released graph and the original graph.
Consequently we can show that if we take the shortest path in the released graph, with $99\%$ probability
the length of the same path in the original graph is $O(V\log V)/\epsilon$ longer than optimal.

This straightforward application of the Laplace mechanism almost matches the lower bound of the previous section.
Surprisingly, with the same error bound it releases not just a short path between a single pair of vertices
but short paths between all pairs of vertices.

One drawback of this argument is that the error depends on the size of the entire graph. In practice
we may expect that the shortest path between most pairs of vertices consist of relatively few edges. 
We would like the error to depend on the number of hops on the shortest path rather than scaling with the
number of vertices. We achieve this with a post-processing step that increases the weight of all edges,
introducing a preference for few-hop paths. We show that if there is a short path with only $k$ hops, then our
algorithm reports a path whose length is at most $O(k\log V/\epsilon)$ longer. 


\begin{theorem}
For all graphs $\G$ and $\gamma\in(0,1)$,
Algorithm \ref{alg:SP} is $\epsilon$-differentially private on $\G$ and
computes paths between all pairs of vertices such that with probability $1-\gamma$, for all pairs of vertices
$s,t\in \V$, if there exists a $k$-hop path of weight $W$ in $(\G,w)$,
the path released has weight at most $W+(2k/\epsilon)\log(E/\gamma)$.
\label{thm:SP-upper-bound}
\end{theorem}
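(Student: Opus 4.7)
My plan is to prove both the privacy and utility statements through a direct analysis of the (undisplayed) Algorithm~\ref{alg:SP}, which from the surrounding text is: sample $Y_e \sim \Lap(1/\epsilon)$ independently for each edge $e \in \E$, form shifted noisy weights $\hat w(e) := w(e) + Y_e + T$ where $T := (1/\epsilon)\log(E/\gamma)$, compute a shortest $s$--$t$ path $P^*_{s,t}$ under $\hat w$ for every pair $(s,t)$, and release these paths (evaluated under the original $w$ for the accuracy statement).

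For privacy, I would observe that the function $w \mapsto w$ (viewed as releasing all $E$ edge weights) has $\ell_1$-sensitivity~$1$ by the definition of neighboring weight functions. Thus adding independent $\Lap(1/\epsilon)$ noise to each coordinate is exactly the Laplace mechanism of Lemma~\ref{lemma:LapMech} at scale $1/\epsilon$, hence $\epsilon$-differentially private. The subsequent shift by $T$ (a constant, not data-dependent) and the shortest-path computation are deterministic post-processing of the noisy weights, so the entire algorithm is $\epsilon$-differentially private on $\G$.

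For utility, I would first apply a union bound over the $E$ Laplace samples: since $\Pr[|Y_e| > t/\epsilon] = e^{-t}$, setting $t = \log(E/\gamma)$ gives $\Pr[\max_e |Y_e| > T] \leq \gamma$. Condition on the event $\max_e |Y_e| \leq T$, which holds with probability at least $1-\gamma$. Fix any pair $s,t$ and suppose $P_k$ is a $k$-hop $s$--$t$ path of weight $W$ in $(\G,w)$. Then
\[
\hat w(P_k) \;=\; W + \sum_{e \in P_k} Y_e + kT \;\leq\; W + kT + kT \;=\; W + 2kT.
\]
Let $P^*$ denote the path the algorithm outputs for $(s,t)$, and let $\ell$ be its hop count. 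Since $P^*$ is optimal under $\hat w$, $\hat w(P^*) \leq \hat w(P_k) \leq W + 2kT$. On the other hand,
\[
\hat w(P^*) \;=\; w(P^*) + \sum_{e \in P^*} Y_e + \ell T \;\geq\; w(P^*) - \ell T + \ell T \;=\; w(P^*),
\]
again using $|Y_e| \leq T$. Combining the two inequalities yields $w(P^*) \leq W + 2kT = W + (2k/\epsilon)\log(E/\gamma)$, which is exactly the bound claimed, and since the same noise realization works simultaneously for every source-target pair, the bound holds for all pairs at once on this single high-probability event.

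I don't expect a serious obstacle; the only subtle point worth emphasizing is the role of the shift~$T$, which is what converts the naive per-edge noise bound into a per-hop bound: without the shift, the algorithm's shortest path in $\tilde w$ could take many edges to exploit favorable negative noise, so one must ``pay'' $T$ per extra hop to ensure that any $\ell$-hop competitor is penalized by $\ell T$, cancelling the best-case $-\ell T$ contribution from the noise along $P^*$. The argument that $w(P^*) \leq \hat w(P^*)$ is exactly what makes this cancellation go through.
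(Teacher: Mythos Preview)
Your proposal is correct and follows essentially the same argument as the paper: a union bound over the $E$ Laplace noises to get $|Y_e|\le T$ simultaneously, then the sandwich $w(P^*) \le \hat w(P^*) \le \hat w(P_k) \le w(P_k)+2kT$, which is exactly the paper's chain $w(\Q)\le w'(\Q)\le w'(\Q')\le w(\Q')+(2\hl(\Q')/\epsilon)\log(E/\gamma)$. Your explicit privacy argument via the Laplace mechanism plus post-processing, and your discussion of why the shift $T$ is needed, are welcome additions that the paper leaves implicit.
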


In particular, if the shortest path in $\G$ has $k$ hops, then Algorithm \ref{alg:SP} 
releases a path only $(2k/\epsilon)\log(E/\gamma)$ longer than optimal. This error term is proportional to 
the number
of hops on the shortest path.
Noting that the shortest path between any pair of vertices consists of fewer than $V$ hops, 
we obtain the following corollary. 

\begin{corollary}
For any $\gamma\in(0,1)$, with probability $1-\gamma$ Algorithm \ref{alg:SP} computes paths between every pair of vertices
with approximation error at most $(2V/\epsilon)\log(E/\gamma)$.
\end{corollary}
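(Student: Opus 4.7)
The plan is to derive this corollary as an essentially immediate consequence of Theorem \ref{thm:SP-upper-bound}. I would first observe that for any pair of vertices $s,t \in \V$ in a connected component, the actual shortest path $\SP_w(s,t)$ is a simple path (any cycle in a path between $s$ and $t$ can be removed without increasing the weight, since all weights are nonnegative). A simple path visits each vertex at most once, so it contains at most $V-1 < V$ edges, i.e.\ its hop length satisfies $k \leq V-1$.

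Next, I would invoke Theorem \ref{thm:SP-upper-bound} with the choice $W = d_w(s,t)$ and $k = \hl(\SP_w(s,t)) \leq V-1$. With probability at least $1-\gamma$, simultaneously for every pair of vertices $s,t$, the path released by Algorithm \ref{alg:SP} has weight at most
\begin{equation*}
W + \frac{2k}{\epsilon}\log(E/\gamma) \;\leq\; d_w(s,t) + \frac{2V}{\epsilon}\log(E/\gamma).
\end{equation*}
Rearranging gives the claimed approximation-error bound of $(2V/\epsilon)\log(E/\gamma)$ for every pair of vertices simultaneously. The $\epsilon$-differential privacy guarantee is inherited directly from Algorithm \ref{alg:SP} as stated in Theorem \ref{thm:SP-upper-bound}, and requires no further argument since the corollary involves only post-processing of the same released object.

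There is essentially no obstacle here: the only step that requires a moment's thought is justifying why it suffices to take $k \leq V-1$, which follows from the simplicity of optimal paths under nonnegative weights. The rest is substitution into the statement of Theorem \ref{thm:SP-upper-bound}. A one- or two-sentence proof should suffice in the paper.
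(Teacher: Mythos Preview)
Your proposal is correct and matches the paper's own argument essentially verbatim: the paper simply notes that the shortest path between any pair of vertices has fewer than $V$ hops and invokes Theorem~\ref{thm:SP-upper-bound}. Your additional remark justifying $k\le V-1$ via simplicity of shortest paths under nonnegative weights is a reasonable elaboration of what the paper leaves implicit.
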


\begin{figure*}[t]
\begin{algorithm}[Private shortest paths]
\label{alg:SP}
\noindent
\Inputs{$\G=(\V,\E)$, $w:\E\to\mathbb{R}^+$, $\gamma>0$.}
For each edge $e\in \E$, do the following:
\begin{enumerate}
\item Sample $X_e \leftarrow \Lap(1/\epsilon)$
\item Let $w'(e) = w(e) + X_e + (1/\epsilon)\log(E/\gamma)$, where $E=|\E|$
\end{enumerate}
Output $w'$.\\\\
The approximate shortest path between a pair of vertices $x,y\in\V$ is taken to be the shortest path
$\SP_{w'}(x,y)$ in the weighted graph $(\G,w')$. 
\end{algorithm}
\end{figure*}

\begin{proof}[Proof of Theorem \ref{thm:SP-upper-bound}]
Each random variable $X_e$ is distributed according to $\Lap(1/\epsilon)$, 
so with probability $1-\gamma$ we have that $|X_e| \leq (1/\epsilon)\log(1/\gamma)$
for any $\gamma\in(0,1)$. 
By a union bound, with probability $1-\gamma$ 
all $E<V^2$ of these random variables have magnitude at most
$(1/\epsilon)\log(E/\gamma)$. Conditioning on this event, for each edge $e\in\E$,
the modified weight computed by the algorithm satisfies
$$w(e)\leq w'(e) \leq w(e)+(2/\epsilon)\log(E/\gamma).$$
Therefore, for any $k$-hop path $\P$ we have that
$$w(\P)\leq w'(\P) \leq w(\P)+(2k/\epsilon)\log(E/\gamma).$$
For any $s,t\in \V$, if $\Q$ is the path from $s$ to $t$ produced by the algorithm and $\Q'$ is any path from $s$ to $t$,
then we have that $\Q$ is a shortest path under $w'$, so
$$w(\Q)\leq w'(\Q) \leq w'(\Q') \leq w(\Q') + (2\hl(\Q')/\epsilon)\log(E/\gamma).$$
\end{proof}

\paragraph{Acknowledgments}
I am grateful to Salil Vadhan and Shafi Goldwasser for their guidance and encouragement throughout this project. Thanks to Jon Ullman, Tommy MacWilliam and anonymous referees for helpful suggestions.
This work was supported by a DOE CSGF Fellowship, NSF Frontier CNS-1413920, DARPA W911NF-15-C-0236, 
and the Simons Foundation (agreement dated 6-5-12).



\bibliography{DP_SPsandDists}

\newcommand{\etalchar}[1]{$^{#1}$}
\begin{thebibliography}{DKM{\etalchar{+}}06}

\bibitem[BBDS12]{BBDS12}
Jeremiah Blocki, Avrim Blum, Anupam Datta, and Or~Sheffet.
\newblock The johnson-lindenstrauss transform itself preserves differential
  privacy.
\newblock In {\em Foundations of Computer Science (FOCS), 2012 IEEE 53rd Annual
  Symposium on}, pages 410--419. IEEE, 2012.

\bibitem[BBDS13]{BBDS13}
Jeremiah Blocki, Avrim Blum, Anupam Datta, and Or~Sheffet.
\newblock Differentially private data analysis of social networks via
  restricted sensitivity.
\newblock In {\em Proceedings of the 4th conference on Innovations in
  Theoretical Computer Science}, pages 87--96. ACM, 2013.

\bibitem[BNSV15]{BNSV15}
Mark Bun, Kobbi Nissim, Uri Stemmer, and Salil Vadhan.
\newblock Differentially private release and learning of threshold functions.
\newblock {\em arXiv preprint arXiv:1504.07553}, 2015.

\bibitem[CN82]{CN82}
GJ~Chang and GL~Nemhauser.
\newblock The k-domination and k-stability problem on graphs.
\newblock {\em Techn. Report}, 540, 1982.

\bibitem[CSS10]{CSS10}
TH~Hubert Chan, Elaine Shi, and Dawn Song.
\newblock Private and continual release of statistics.
\newblock In {\em Automata, Languages and Programming}, pages 405--417.
  Springer, 2010.

\bibitem[DKM{\etalchar{+}}06]{DKMMN06}
Cynthia Dwork, Krishnaram Kenthapadi, Frank McSherry, Ilya Mironov, and Moni
  Naor.
\newblock Our data, ourselves: Privacy via distributed noise generation.
\newblock In {\em Advances in Cryptology-EUROCRYPT 2006}, pages 486--503.
  Springer, 2006.

\bibitem[DMNS06]{DMNS06}
Cynthia Dwork, Frank McSherry, Kobbi Nissim, and Adam Smith.
\newblock Calibrating noise to sensitivity in private data analysis.
\newblock In {\em Theory of cryptography}, pages 265--284. Springer, 2006.

\bibitem[DNPR10]{DNPR10}
Cynthia Dwork, Moni Naor, Toniann Pitassi, and Guy~N Rothblum.
\newblock Differential privacy under continual observation.
\newblock In {\em Proceedings of the forty-second ACM symposium on Theory of
  computing}, pages 715--724. ACM, 2010.

\bibitem[DR13]{DR13}
Cynthia Dwork and Aaron Roth.
\newblock The algorithmic foundations of differential privacy.
\newblock {\em Theoretical Computer Science}, 9(3-4):211--407, 2013.

\bibitem[DRV10]{DRV10}
Cynthia Dwork, Guy~N Rothblum, and Salil Vadhan.
\newblock Boosting and differential privacy.
\newblock In {\em Foundations of Computer Science (FOCS), 2010 51st Annual IEEE
  Symposium on}, pages 51--60. IEEE, 2010.

\bibitem[GLM{\etalchar{+}}10]{GLMRT10}
Anupam Gupta, Katrina Ligett, Frank McSherry, Aaron Roth, and Kunal Talwar.
\newblock Differentially private combinatorial optimization.
\newblock In {\em Proceedings of the Twenty-First Annual ACM-SIAM Symposium on
  Discrete Algorithms}, pages 1106--1125. Society for Industrial and Applied
  Mathematics, 2010.

\bibitem[GRU12]{GRU12}
Anupam Gupta, Aaron Roth, and Jonathan Ullman.
\newblock Iterative constructions and private data release.
\newblock In {\em Theory of Cryptography}, pages 339--356. Springer, 2012.

\bibitem[HHR{\etalchar{+}}14]{HHR+14}
Justin Hsu, Zhiyi Huang, Aaron Roth, Tim Roughgarden, and Zhiwei~Steven Wu.
\newblock Private matchings and allocations.
\newblock In {\em Proceedings of the 46th Annual ACM Symposium on Theory of
  Computing}, pages 21--30. ACM, 2014.

\bibitem[HLMJ09]{HLMJ09}
Michael Hay, Chao Li, Gerome Miklau, and David Jensen.
\newblock Accurate estimation of the degree distribution of private networks.
\newblock In {\em Data Mining, 2009. ICDM'09. Ninth IEEE International
  Conference on}, pages 169--178. IEEE, 2009.

\bibitem[KNRS13]{KNRS13}
Shiva~Prasad Kasiviswanathan, Kobbi Nissim, Sofya Raskhodnikova, and Adam
  Smith.
\newblock Analyzing graphs with node differential privacy.
\newblock In {\em Theory of Cryptography}, pages 457--476. Springer, 2013.

\bibitem[KRSY11]{KRSY11}
Vishesh Karwa, Sofya Raskhodnikova, Adam Smith, and Grigory Yaroslavtsev.
\newblock Private analysis of graph structure.
\newblock {\em Proceedings of the VLDB Endowment}, 4(11):1146--1157, 2011.

\bibitem[MM75]{MM75}
A~Meir and JW~Moon.
\newblock Relations between packing and covering numbers of a tree.
\newblock {\em Pacific J. Math}, 61(1):225--233, 1975.

\bibitem[NRS07]{NRS07}
Kobbi Nissim, Sofya Raskhodnikova, and Adam Smith.
\newblock Smooth sensitivity and sampling in private data analysis.
\newblock In {\em Proceedings of the thirty-ninth annual ACM symposium on
  Theory of computing}, pages 75--84. ACM, 2007.

\bibitem[War65]{Warner65}
Stanley~L Warner.
\newblock Randomized response: A survey technique for eliminating evasive
  answer bias.
\newblock {\em Journal of the American Statistical Association},
  60(309):63--69, 1965.

\end{thebibliography}

\begin{appendix}
\section{Distances in the path graph}
\label{sec:ap-dists-path}

In this section we give an explicit description of the private all-pairs distance
algorithm for the path graph $\P=(\V,\E)$ with vertex set $\V=[V]$ and edge set $\E=\{(i,i+1):i\in[V-1]\}$.
This result is a restatement 
of a result of \cite{DNPR10},
and is generalized to trees in Section \ref{ssec:DPSdists_trees}. 
This alternate argument for the special case of the path graph is included for 
illustration.

The idea behind the construction is as follows. 
We designate a small set of hubs and store more accurate distances
between consecutive hubs.
As long as we can accurately estimate the distance from 
any vertex to the nearest hub and the distance between any pair of hubs, we can 
use these distances to obtain an estimate of the distance between any pair of vertices. 
Given any pair of vertices $x,y\in \V$, in order to estimate the distance $dist(x,y)$,
we first find the hubs $h_x$ and $h_y$ nearest to $x$ and $y$. 
We estimate the distance $dist(x,y)$ by adding our estimates for the distances
$dist(x,h_x)$, $dist(h_x,h_y)$, and $dist(h_y,y)$. 

Instead of simply using a single set of hubs, we will use a hierarchical tree of hubs of different levels.
There will be many hubs of low level and only a small number of hubs of high level.
Each hub will store an estimate of the distance to the next hub at the same level and every
lower level. 
Hubs higher in the hierarchy will allow us to skip directly to distant vertices instead of accruing 
error by adding together linearly many noisy estimates. In order to estimate the distance between
a particular pair of vertices $x,y\in \V$, we will only consider a small number of hubs on each level
of the hierarchy. Since the total number of levels is not too large, this will result in a much more
accurate differentially private estimate of the distance between $x$ and $y$. 

\begin{theorem}
Let $\P=(\V,\E)$ be the path graph on $V$ vertices. For any $\epsilon>0$, 
there is algorithm $\A$ that is $\epsilon$-differentially 
private on $\P$ that on input $w:\E\to\mathbb{R}^+$
releases approximate all-pairs distances such that for each released distance,
with probability $1-\gamma$ the approximation error is 
$O(\log^{1.5} V \log(1/\gamma)) / \epsilon$ for any $\gamma\in(0,1)$. 
\end{theorem}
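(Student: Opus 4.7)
The plan is to instantiate the hierarchical hub idea sketched before the theorem as the standard dyadic/binary-tree mechanism. I build a balanced binary tree $\T$ whose $V-1$ leaves are in bijection with the edges of $\P$ in their natural left-to-right order; each node $u$ of $\T$ is associated with the contiguous interval $I_u \subseteq \E$ of edges lying in its subtree. The algorithm computes $w(I_u) = \sum_{e \in I_u} w(e)$ for every node $u$, adds independent $\Lap((\log V)/\epsilon)$ noise, and releases the resulting vector of approximate interval sums.

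For privacy, I would compute the global $\ell_1$ sensitivity of this vector. Any single edge $e$ belongs to exactly the $O(\log V)$ intervals corresponding to the ancestors of the leaf for $e$ in $\T$. Consequently, if $\|w - w'\|_1 \leq 1$, then the vector of interval sums changes by at most $O(\log V)$ in $\ell_1$ norm. Scaling Laplace noise to $(\log V)/\epsilon$ and invoking the Laplace mechanism (Lemma \ref{lemma:LapMech}) then yields $\epsilon$-differential privacy.

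For accuracy, given a query pair $x,y \in \V$ with $x<y$, the distance $d_w(x,y)$ is the sum of weights over the contiguous interval $[x,y] \subseteq \E$. A standard dyadic decomposition (identical to range-sum queries on a segment tree) writes $[x,y]$ as a disjoint union of at most $2\lceil \log_2 V\rceil$ intervals $I_{u_1}, \ldots, I_{u_m}$ corresponding to nodes of $\T$. The algorithm then outputs $\sum_{j=1}^m \bigl(w(I_{u_j}) + Y_{u_j}\bigr)$ where each $Y_{u_j} \sim \Lap((\log V)/\epsilon)$ is independent. The error is therefore a sum of $m = O(\log V)$ independent $\Lap((\log V)/\epsilon)$ random variables, and Lemma \ref{lemma:LapSumConcBd} gives that with probability at least $1-\gamma$ its magnitude is $O\bigl((\log V / \epsilon)\sqrt{\log V}\log(1/\gamma)\bigr) = O(\log^{1.5} V \cdot \log(1/\gamma))/\epsilon$, matching the claimed bound. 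Since the mechanism is differentially private and the query answer is a deterministic post-processing of its output, this completes the proof.

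The only substantive steps are the two counting facts: that each edge appears in $O(\log V)$ tree intervals (giving the sensitivity bound) and that each query interval decomposes into $O(\log V)$ tree intervals (giving the error bound). Both are standard properties of balanced binary trees, so there is no real obstacle once the mechanism is set up; the rest is a direct appeal to Lemmas \ref{lemma:LapMech} and \ref{lemma:LapSumConcBd}. This is precisely the binary-tree counter construction of \cite{DNPR10} reinterpreted in the private-edge-weight framework, and it specializes the tree argument of Theorem \ref{thm:SSdists-trees} to the path.
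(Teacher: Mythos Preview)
Your proposal is correct and is essentially the same construction as the paper's: the paper defines nested hub sets $S_i=\{x: V^{i/k}\mid x\}$, releases noisy distances between consecutive hubs at each level with $\Lap(k/\epsilon)$ noise, and then sets $k=\log V$, at which point $V^{1/k}=2$ and the construction coincides with your dyadic binary tree over the $V-1$ edges. Both arguments invoke the same two counting facts (each edge lies in $O(\log V)$ released intervals, and each query interval decomposes into $O(\log V)$ released intervals) and conclude via Lemmas~\ref{lemma:LapMech} and~\ref{lemma:LapSumConcBd}; your presentation is simply the $k=\log V$ specialization stated directly in tree language, which is exactly the \cite{DNPR10} binary counter that the paper says it is restating.
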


\begin{proof}
For fixed $k$, define  nested subsets $\V=S_0\supset S_1\supset \ldots \supset S_{k-1}$ of the vertex set $\V=[V]$ as follows. Let
$$S_i = \{x\in [V] : V^{i/k} \mid x\}.$$
That is, $S_0$ consists of all the vertices, and in general $S_i$ consists of one out of every
$V^{i/k}$ vertices on the path. Then $|S_i| = V^{(k-i)/k}$. 
Let 
$s_{i,1},s_{i,2}\ldots, s_{i,|S_i|}$
denote the elements of $S_i$ in increasing order, for each $i$. 
Using the Laplace mechanism (Lemma \ref{lemma:LapMech}),  release noisy distances between each pair of consecutive vertices
$s_{i,j}, s_{i,j+1}$ of each set, adding noise proportional to $\Lap(k/\epsilon)$. 
Note that since the vertices in each $S_i$ are in increasing order, each edge of $P$ is only considered
for a single released difference from each set.  Consequently the total sensitivity to release
all of these distances is $k$, so releasing these noisy distances is $\epsilon$-differentially private. 
Using post-processing and these special distances, we will compute approximate all-pairs distances with 
small error.

For any pair of vertices $x,y$, consider the path $\P[x,y]$ in $\P$ between $x$ and $y$,
and let $i$ be the largest index such that $S_i$ contains multiple vertices of $\P[x,y]$.
We must have that $S_i\cap \P[x,y] < 2V^{1/k}$, since otherwise
$S_{i+1}$ would contain at least two vertices on $\P[x,y]$.   
Let $x_i, y_i$ denote the first and last vertices in $S_i\cap \P[x,y]$.
For $j<i$, let $x_j$ denote the first vertex in $S_j\cap \P[x,x_i]$
and let $y_j$ denote the last vertex in $S_j\cap \P[y_i,y]$.
There are at most $1+V^{1/k}$ vertices of $S_j$ in $\P[x_j, x_{j+1}]$,
since otherwise this interval would contain another vertex of $S_{j+1}$.
Similarly, there are at most $1+V^{1/k}$ vertices of $S_j$ in $\P[y_{j+1},y_j]$.
Therefore we can express the distance from $x_j$ to $x_{j+1}$ as the
sum of at most $V^{1/k}$ distances which were estimated in $S_j$,
and similarly for the distance from $y_{j+1}$ to $y_j$.

Putting this all together, we can estimate the distance from $x=x_0$
to $y=y_0$ as the sum of at most $2(i+1)V^{1/k} \leq 2k V^{1/k}$
approximate distances which were released. But each of these distances
is released with noise distributed according to $\Lap(k/\epsilon)$.
Consequently the total error on the estimated distance from $x$ to $y$ is
the sum of at most $2kV^{1/k}$ random variables distribued according to $\Lap(k/\epsilon)$.
Taking $k=\log V$, the error is the sum of at most $4\log V$ variables distributed
according to $\Lap(\log V / \epsilon)$. By Lemma \ref{lemma:LapSumConcBd}, 
with probability at least $1-\gamma$ the sum of these 
$4\log V$ variables is bounded by $O(\log^{1.5} V \log(1/\gamma))/\epsilon$
for any $\gamma\in(0,1)$. 
Consequently this is an $\epsilon$-differentially private algorithm which releases all-pairs
distances in the path graph $P$ such that for any $\gamma\in (0,1)$, 
with probability at least $1-\gamma$ the error in each released distance is
at most $O(\log^{1.5} V \log(1/\gamma))/\epsilon$.
\end{proof}

\section{Other graph problems}
\label{sec:MST-matching-more}
In this section we consider some additional queries on graphs in the private edge weights model.


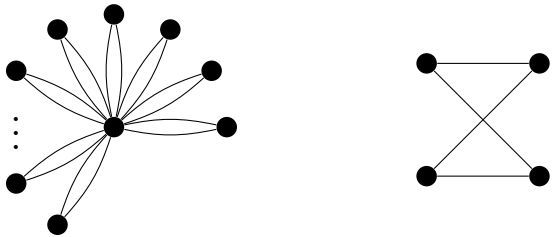
\begin{figure}[b]
\begin{center}
\raisebox{-0.5\height}{
\begin{tikzpicture}[mn/.style={circle,fill=black, scale=.75}]

\draw (0,0) node[mn] (v0) {};
\draw (0:1.5) node[mn] (v1) {};
\draw (30:1.5) node[mn] (v2) {};
\draw (60:1.5) node[mn] (v3) {};
\draw (90:1.5) node[mn] (v4) {};
\draw (120:1.5) node[mn] (v5) {};
\draw (150:1.5) node[mn] (v6) {};
\draw (210:1.5) node[mn] (v8) {};
\draw (240:1.5) node[mn] (v9) {};

\def\newvdots{\vbox{\baselineskip=7pt \lineskiplimit=0pt 
\kern6pt \hbox{.}\hbox{.}\hbox{.}}} 

\draw (180:1.3) node[mn,fill=white] (dots) {\Huge{$\newvdots$}};

\path
(v0) edge [bend right = 12] node [below] {} (v1)
(v0) edge [bend left = 12] node [above] {} (v1)
(v0) edge [bend right =12] node [below] {} (v2)
(v0) edge [bend left = 12] node [above] {} (v2)
(v0) edge [bend right = 12] node [below] {} (v3)
(v0) edge [bend left = 12] node [above] {} (v3)
(v0) edge [bend right = 12] node [below] {} (v4)
(v0) edge [bend left = 12] node [above] {} (v4)
(v0) edge [bend right = 12] node [below] {} (v5)
(v0) edge [bend left = 12] node [above] {} (v5)
(v0) edge [bend right = 12] node [below] {} (v6)
(v0) edge [bend left = 12] node [above] {} (v6)
(v0) edge [bend right = 12] node [below] {} (v8)
(v0) edge [bend left = 12] node [above] {} (v8)
(v0) edge [bend right = 12] node [below] {} (v9)
(v0) edge [bend left = 12] node [above] {} (v9);
\end{tikzpicture}}
\hspace{2cm}
\raisebox{-0.5\height}{
\begin{tikzpicture}[mn/.style={circle,fill=black, scale=.75}]

\draw (0,0) node[mn] (v0) {};
\draw (0,1.5) node[mn] (v1) {};
\draw (1.5,0) node[mn] (w0) {};
\draw (1.5,1.5) node[mn] (w1) {};

\path 
(v0) edge node [below] {} (w0)
(v0) edge node [left] {} (w1)
(v1) edge node [right] {} (w0)
(v1) edge node [above] {} (w1);

\end{tikzpicture}}

\end{center}
\caption{(Left) The graph used in the reduction of Lemma \ref{lemma:MST-reduction}. (Right) A single gadget
in the graph used in the reduction of Lemma \ref{lemma:matching-reduction}.}
\label{fig:other-problems}
\end{figure}

\subsection{Almost-minimum spanning tree}
\label{ssec:approx-MST}
We first consider the problem of releasing a low-cost spanning tree. 
The work of
\cite{NRS07} showed how to privately approximate the cost of the minimum
spanning tree in a somewhat related privacy setting. We seek to release 
an actual tree of close to minimal cost under our model.
Using techniques similar to the lower bound for shortest paths from Section \ref{ssec:DPSPlower},
we obtain a lower bound of $\Omega(V)$ for the error of releasing a low-cost spanning tree,
and show that the Laplace mechanism yields a spanning tree of cost $O(V\log V)$ more than optimal.
Note that in this section edge weights are permitted to be negative. 

\begin{theorem}
There exists a graph $\G=(\V,\E)$ such that for any spanning tree algorithm $\A$ that is $(\epsilon,\delta)$-differentially
private on $\G$, there exist edge weights $w:\E\to\binset$ such that the expected weight of the spanning tree
$\A(w)$ is at least 
$$\alpha = (V-1)\cdot\left(\frac{1-(1+e^\epsilon)\delta}{1+e^{2\epsilon}}\right)$$
longer than the weight of the minimum spanning tree. In particular, for sufficiently small $\epsilon$ and $\delta$, 
$\alpha \geq 0.49 (V-1)$. 
\label{thm:MST-lower-bound}
\end{theorem}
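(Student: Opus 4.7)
The plan is to mirror the reduction used for the shortest path lower bound in Theorem \ref{thm:SP-lower-bound}, replacing the ``two-parallel-edges-per-hop'' path graph by a star graph with doubled edges. Specifically, I would take $\G = (\V,\E)$ to have a central vertex $v_0$ together with outer vertices $v_1,\ldots,v_n$ (so $V = n+1$), and for each $i\in [n]$ two parallel edges $e_i^{(0)}, e_i^{(1)}$ between $v_0$ and $v_i$ (as in the left side of Figure \ref{fig:other-problems}); if a simple graph is required, subdivide each edge with an extra vertex, losing only a factor of $2$ in the final bound. Given input $x\in\binset^n$, define $w_x(e_i^{(x_i)})=0$ and $w_x(e_i^{(1-x_i)})=1$, so that the minimum spanning tree has weight $0$ and any spanning tree must pick exactly one edge from each parallel pair.

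The core observation is the analogue of Lemma \ref{lemma:SP-reduction} for spanning trees. Any spanning tree $T$ of $\G$ contains exactly one of $\{e_i^{(0)}, e_i^{(1)}\}$ for each $i\in[n]$, since otherwise $v_i$ would be disconnected or $T$ would contain a cycle. So if $\A$ is a spanning tree algorithm with expected approximation error at most $\alpha$ on all weight inputs, I can define $\B:\binset^n\to\binset^n$ by running $\A$ on $(\G,w_x)$, reading off $y_i=0$ if $e_i^{(0)}$ is in the returned tree and $y_i=1$ otherwise. The excess weight of the tree equals the number of coordinates on which $x\neq y$, so $\EE[d_H(x,y)]\leq \alpha$. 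Neighboring inputs $x\sim x'$ flip exactly one coordinate, which changes $w_x$ to $w_{x'}$ with $\ell_1$ distance $2$; by group privacy applied to $\A$ and post-processing, $\B$ is $(2\epsilon,(1+e^\epsilon)\delta)$-differentially private.

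To finish, I would apply Lemma \ref{lemma:reidentification-Hamming} to $\B$ with parameters $(2\epsilon,(1+e^\epsilon)\delta)$ and dimension $n=V-1$. This forces the existence of some $x\in\binset^n$ on which
\[
\EE[d_H(\B(x),x)] \;\geq\; \frac{n\bigl(1-(1+e^\epsilon)\delta\bigr)}{1+e^{2\epsilon}} \;=\; (V-1)\cdot\frac{1-(1+e^\epsilon)\delta}{1+e^{2\epsilon}},
\]
which contradicts the assumed accuracy of $\A$ unless $\alpha$ is at least this quantity. The ``sufficiently small $\epsilon,\delta$ gives $0.49(V-1)$'' clause then follows by continuity as $\epsilon,\delta\to 0$, since the prefactor tends to $1/2$.

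I do not expect any serious obstacle: the reduction is essentially the one from Section \ref{ssec:DPSPlower}, and the only point that deserves a moment's care is verifying that the $\ell_1$ sensitivity of the map $x\mapsto w_x$ is $2$ (so that group privacy, rather than straight neighbor privacy, is invoked) and that the spanning-tree structure forces a one-for-one correspondence between excess MST weight and Hamming errors. Both are immediate from the definition of $w_x$ and the fact that each outer vertex has degree $2$ in $\G$ with both edges going to the same neighbor $v_0$.
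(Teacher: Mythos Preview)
Your proposal is correct and essentially identical to the paper's proof: the paper uses the same doubled-edge star graph (Figure \ref{fig:other-problems}, left), the same encoding $w_x$, the same reduction to a reconstruction algorithm $\B$ via Lemma \ref{lemma:MST-reduction}, the same group-privacy step yielding $(2\epsilon,(1+e^\epsilon)\delta)$-privacy for $\B$, and the same application of Lemma \ref{lemma:reidentification-Hamming}.
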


We first prove a lemma reducing the problem of reidentifying rows in a database to privately finding an approximate 
minimum spanning tree. 
Let $\G=(\V,\E)$ be the $(n+1)$-vertex graph with vertex set $\V=\{0,\ldots,n\}$ 
and a pair of edges
$e_i^{(0)}$ and $e_i^{(1)}$ between vertex $0$ and each vertex $i>0$, as shown in Figure \ref{fig:other-problems}.
(As in Lemma \ref{lemma:SP-reduction}, this is a multigraph, but we can transform it into 
a simple graph by adding $n$ extra vertices, changing the bound obtained by a factor of $2$.)

\begin{lemma}
Let $\G$ be the graph defined in the previous paragraph, and let $\epsilon, \delta\geq 0$. 
For any $\alpha$, let $\A$ be an algorithm that is $(\epsilon,\delta)$-differentially private on $\G$ 
that on input $w:\E\to\binset$ produces a spanning tree whose weight in
expectation is at most $\alpha$ greater than optimal. Then there exists a 
$(2\epsilon, (1+e^\epsilon)\delta)$-differentially private algorithm $\B$ which on input $x\in\binset^n$
produces $y\in\binset^n$ such that the expected Hamming distance $d_H(x,y)$ is at most $\alpha$. 
\label{lemma:MST-reduction}
\end{lemma}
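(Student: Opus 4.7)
The plan is to mimic the structure of Lemma \ref{lemma:SP-reduction} almost verbatim, exploiting the fact that the graph $\G$ defined just before the lemma forces any spanning tree to pick exactly one edge from each parallel pair. First I would associate to each input $x\in\{0,1\}^n$ the weight function $w_x$ defined by $w_x(e_i^{(x_i)})=0$ and $w_x(e_i^{(1-x_i)})=1$. Since vertex $i>0$ is only adjacent to vertex $0$, and only via $e_i^{(0)}$ and $e_i^{(1)}$, any spanning tree of $\G$ must contain exactly one edge from each pair $\{e_i^{(0)}, e_i^{(1)}\}$. In particular the minimum spanning tree under $w_x$ consists of the $n$ zero-weight edges and has total weight $0$.

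Next I would define $\B$ as follows: on input $x\in\{0,1\}^n$, run $\A$ on $(\G,w_x)$ to obtain a spanning tree $T$; then set $y_i=0$ if $e_i^{(0)}\in T$ and $y_i=1$ otherwise, and output $y$. By the previous observation $y$ is well defined. The weight of $T$ under $w_x$ equals $\big|\{i : e_i^{(1-x_i)}\in T\}\big|=\big|\{i : y_i\neq x_i\}\big|=d_H(x,y)$, so by the accuracy hypothesis on $\A$ we get $\EE[d_H(x,y)] \le \alpha$.

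For privacy, if $x$ and $x'$ differ only in coordinate $i$, then $w_x$ and $w_{x'}$ agree on every edge except $e_i^{(0)}$ and $e_i^{(1)}$, where they swap a $0$ and a $1$; hence $\|w_x-w_{x'}\|_1=2$. Applying group privacy (two invocations of the $(\epsilon,\delta)$-differential privacy guarantee of $\A$) gives
\[
\Pr[\A(w_x)\in S]\le e^{2\epsilon}\Pr[\A(w_{x'})\in S]+(1+e^\epsilon)\delta
\]
for every measurable $S$, exactly as in the proof of Lemma \ref{lemma:SP-reduction}. Since $\B$ accesses $x$ only through $\A(w_x)$ followed by the deterministic postprocessing that reads off which edge of each pair lies in $T$, post-processing robustness implies that $\B$ is $(2\epsilon,(1+e^\epsilon)\delta)$-differentially private, completing the reduction.

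I do not expect any genuine obstacle: the only thing that needs care is the verification that a spanning tree of $\G$ must pick exactly one of $\{e_i^{(0)},e_i^{(1)}\}$ for each $i$, which ensures that the postprocessing step is well defined and that the tree's weight equals the Hamming distance. The privacy computation and the accuracy translation are then direct analogues of the shortest-path reduction.
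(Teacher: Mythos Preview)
Your proposal is correct and follows essentially the same approach as the paper's proof: the same encoding $w_x$, the same decoding of $y$ from the spanning tree, the same group-privacy computation, and the same accuracy argument. Your added observation that any spanning tree of $\G$ must use exactly one edge from each parallel pair makes the identity $w_x(T)=d_H(x,y)$ explicit, whereas the paper only states (and only needs) the inequality $d_H(x,y)\le w_x(T)$; otherwise the arguments coincide.
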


\begin{proof}
The outline of the proof is the same as that of Lemma \ref{lemma:SP-reduction}.
For input $x\in\binset^n$, the corresponding edge weight function $w_x$ is given by
$w_x(e_i^{(x_i)})=0$  and $w_x(e_i^{(1-x_i)})=1$, where $x_i$ is the $i$th bit of $x$. 

We define algorithm $\B$ as follows. On input $x\in\binset^n$, apply $\A$ to $(\G,w_x)$,
and let $\T$ be the tree produced. Define $y\in\binset^n$ by setting $y_i=0$ if $e_i^{(0)}\in\T$
and $y_i=1$ otherwise. Output $y$. 

It is straightforward to verify that $\B$ is $(2\epsilon, (1+e^\epsilon)\delta)$-differentially private.
We now bound the expected Hamming distance of $x$ and $y$. 
The minimum spanning tree in $\G$ has weight $0$, so the expected weight of $\T$ is at most $\alpha$
and $\T$ must consist of at most $\alpha$ edges $e_i^{(1-x_i)}$. But $y_i\neq x_i$ only if $e_i^{(1-x_i)}\in \T$,
so in expectation $d_H(x,y) \leq w(\T) \leq \alpha$. 
\end{proof}

We now complete the proof of Theorem \ref{thm:MST-lower-bound}.
\begin{proof}[Proof of Theorem \ref{thm:MST-lower-bound}]
Assume that there is some $(\epsilon,\delta)$-differentially private algorithm which on all inputs produces a
spanning tree with expected weight less than $\alpha$ more than optimal. By Lemma \ref{lemma:MST-reduction},
there is a $(2\epsilon, (1+e^\epsilon)\delta)$-differentially private algorithm which for all
$x\in\binset^n$ produces $y\in\binset^n$ with expected Hamming distance less than $\alpha$. 
But then Lemma \ref{lemma:reidentification-Hamming}, for some $x$ the 
expected Hamming distance $\EE(d_H(x,y)) \geq 
\frac{n(1 - (1+e^{\epsilon})\delta)}{1+e^{2\epsilon}}=\alpha$, yielding a contradiction.
\end{proof}

We now show that the Laplace mechanism (Lemma \ref{lemma:LapMech}) almost matches this lower bound.

\begin{theorem}
For any $\epsilon,\gamma>0$ and $\G=(\V,\E)$, there is an algorithm $\A$ that is $\epsilon$-differentially
private on $\G$ that on input
$w:\E\to\mathbb{R}$ releases with probability $1-\gamma$ a spanning tree
of weight at most $((V-1)/\epsilon)\log(E/\gamma)$ larger than optimal.
\end{theorem}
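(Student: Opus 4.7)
The plan is to invoke the Laplace mechanism directly on the edge weight vector, compute a minimum spanning tree with respect to the noisy weights, and release it. Concretely, for each edge $e \in \E$ sample $X_e \leftarrow \Lap(1/\epsilon)$ independently, set $w'(e) = w(e) + X_e$, and output any minimum spanning tree $T$ of $(\G, w')$.

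For privacy, the identity map $w \mapsto w$ viewed as a vector in $\RR^{|\E|}$ has $\ell_1$ sensitivity exactly $1$ under our neighboring relation, so releasing $w'$ is $\epsilon$-differentially private by the Laplace mechanism (Lemma~\ref{lemma:LapMech}). Since the graph topology is public, computing an MST of $(\G, w')$ is a deterministic function of $w'$, and $\epsilon$-differential privacy is preserved under post-processing.

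For accuracy, a union bound over the $E$ independent Laplace variables shows that except with probability $\gamma$, every noise term satisfies $|X_e| \leq (1/\epsilon)\log(E/\gamma)$. Conditioning on this event, any spanning tree $S$, which has exactly $V-1$ edges, satisfies $|w'(S) - w(S)| \leq ((V-1)/\epsilon)\log(E/\gamma)$. Let $T^*$ denote a true minimum spanning tree under $w$. Optimality of $T$ under $w'$ gives $w'(T) \leq w'(T^*)$, so chaining the three bounds yields $w(T) - w(T^*) \leq 2((V-1)/\epsilon)\log(E/\gamma)$, which matches the stated additive error up to a constant factor (and the factor can be tightened slightly by a uniform deterministic shift of the noisy weights, analogous to the trick in Algorithm~\ref{alg:SP}, which leaves the MST unchanged).

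The key observation in the accuracy argument is simply that the accumulated noise on any candidate spanning tree is controlled by $V-1$ times the per-edge tail bound rather than by $|\E|$ times that quantity, because every spanning tree uses exactly $V-1$ edges. This structural cap is what makes the additive error near-linear in $V$ rather than near-quadratic, and it matches the $\Omega(V)$ lower bound of Theorem~\ref{thm:MST-lower-bound} up to the inevitable logarithmic factor. There is no serious obstacle in the argument; it is essentially a textbook application of the Laplace mechanism combined with this elementary structural observation about spanning trees.
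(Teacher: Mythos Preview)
Your proof is correct and is essentially identical to the paper's: add $\Lap(1/\epsilon)$ noise to every edge, output an MST of the noisy graph, argue privacy by post-processing of the Laplace mechanism, and bound accuracy by a union bound on the $E$ Laplace tails together with the chain $w(T)\leq w'(T)+(V-1)\beta\leq w'(T^*)+(V-1)\beta\leq w(T^*)+2(V-1)\beta$ where $\beta=(1/\epsilon)\log(E/\gamma)$. The paper's own proof lands on the same $2((V-1)/\epsilon)\log(E/\gamma)$ bound, so the factor-of-two gap with the stated constant is present there as well. One small caveat: your parenthetical about tightening the constant via a uniform deterministic shift is not right, since adding the same constant to every edge weight leaves the set of minimum spanning trees unchanged and hence cannot improve the comparison between $w(T)$ and $w(T^*)$.
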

\begin{proof}
Consider the algorithm that adds noise $X_e$ distributed according to $\Lap(1/\epsilon)$ for each edge $e\in \E$
and releases the minimum spanning tree on the resulting graph $(\G,w')$. This is $\epsilon$-differentially private,
since it is post-processing of the Laplace mechanism. We now show that the resulting error is small.
By a union bound, with probability $1-\gamma$ we have that $|X_e|\leq (1/\epsilon)\log(E/\gamma)$ for
every $e\in \E$. Consequently, conditioning on this event, if $\T$ is the spanning tree released by the algorithm
and $\T^*$ is the minimum spanning tree, then we have that
\begin{align*}
w(\T) &\leq w'(\T) \;\,+ \frac{V-1}{\epsilon} \cdot \log(E/\gamma)\\
&\leq w'(\T^*) + \frac{V-1}{\epsilon} \cdot \log(E/\gamma)\\
&\leq w(\T^*) \;+ \frac{2(V-1)}{\epsilon} \cdot \log(E/\gamma).
\end{align*}
\end{proof}

\subsection{Low-weight matching}
\label{ssec:approx-matching}

We now consider the problem of releasing a minimum weight matching in a graph in our model. 
As for the minimum spanning tree problem, a minor modification of the
lower bound for shortest paths from Section \ref{ssec:DPSPlower} yields a similar result.
For comparison, \cite{HHR+14} use similar reconstruction techniques to obtain a lower bound for 
a matching problem on bipartite graphs in a somewhat different model in which all edge weights are in $[0,1]$ 
and neighboring graphs can differ on the weights of the edges incident to a single left vertex.
We show a lower bound of $\Omega(V)$ for the error of releasing a low-weight matching tree,
and show that the matching released by the Laplace mechanism
has weight $O(V\log V)$ greater than optimal.

The theorems in this section are stated for the problem of finding a minimum weight perfect matching.
We can also obtain identical results for the problem of finding a minimum weight matching which is not required to be perfect,
and for the corresponding maximum weight matching problems. Our results apply to both bipartite matching
and general matching.
Note that in this section edge weights are permitted to be negative. 

\begin{theorem}
There exists a graph $\G=(\V,\E)$ such that for any perfect matching algorithm $\A$ which is 
$(\epsilon,\delta)$-differentially private on $\G$, there exist edge weights $w:\E\to\binset$ such that
the expected weight of the matching $\A(w)$ is at least 
$$\alpha = \left(\frac{V}{4}\right)\cdot\left(\frac{1-(1+e^\epsilon)\delta}{1+e^{2\epsilon}}\right)$$
larger than the weight of the min-cost perfect matching. In particular, for sufficiently small $\epsilon, \delta$,
$\alpha \geq 0.12 \cdot V$. 
\label{thm:matching-lower-bound}
\end{theorem}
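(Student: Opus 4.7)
The plan is to mimic the reduction-based lower bounds of Section \ref{ssec:DPSPlower} and of Theorem \ref{thm:MST-lower-bound}, replacing the two-parallel-edge gadget by the bipartite $K_{2,2}$ gadget depicted on the right of Figure \ref{fig:other-problems}. The hard graph $\G$ will be a disjoint union of $n = V/4$ copies of this gadget, so each gadget contributes $4$ vertices and admits exactly two perfect matchings (``straight'' and ``crossed''), one of which will encode a bit of a reconstruction database.

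First I would define, for each input $x\in\binset^n$, a weight function $w_x:\E\to\RR^+$ as follows. On gadget $i$, give the two edges of the matching encoding $x_i$ weight $0$, and the two edges of the other matching weight $1/2$. Thus the ``correct'' matching on gadget $i$ has weight $0$ and the ``incorrect'' one has weight $1$. Flipping bit $x_i$ changes four edge weights by $1/2$ each, so $\|w_x - w_{x'}\|_1 = 2$ for inputs differing in a single bit; this is exactly the $\ell_1$ slack that produced the factor of $e^{2\epsilon}$ in Lemmas \ref{lemma:SP-reduction} and \ref{lemma:MST-reduction}. Because the gadgets are vertex-disjoint and each gadget has a zero-weight perfect matching, the minimum-weight perfect matching of $\G$ under $w_x$ has weight $0$, and the weight of any output perfect matching $M$ equals the number of gadgets on which $M$ picks the ``wrong'' pair.

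Next, given a $(\epsilon,\delta)$-differentially private matching algorithm $\A$ with expected excess weight at most $\alpha$, I would construct a reconstruction algorithm $\B:\binset^n\to\binset^n$ in the same way as in Lemma \ref{lemma:MST-reduction}: run $\A(w_x)$, and for each gadget $i$ let $y_i$ be the bit encoded by the matching $\A$ chose in that gadget. Group privacy over an $\ell_1$-distance-$2$ neighborhood turns the $(\epsilon,\delta)$ guarantee of $\A$ into a $(2\epsilon,(1+e^\epsilon)\delta)$ guarantee for $\B$, just as before. The expected Hamming distance $\EE[d_H(x,y)]$ is precisely the expected number of gadgets on which $\A$ picked the wrong matching, which is at most the expected excess weight $\alpha$.

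Finally, applying the reconstruction lower bound of Lemma \ref{lemma:reidentification-Hamming} to $\B$ with its $(2\epsilon,(1+e^\epsilon)\delta)$ parameters and database length $n = V/4$ yields, for some $x$,
\[
\alpha \;\geq\; \EE[d_H(x,y)] \;\geq\; \frac{n\bigl(1-(1+e^\epsilon)\delta\bigr)}{1+e^{2\epsilon}} \;=\; \frac{V}{4}\cdot\frac{1-(1+e^\epsilon)\delta}{1+e^{2\epsilon}},
\]
a contradiction with the assumed utility, giving the claimed bound; for sufficiently small $\epsilon,\delta$ this is at least $0.12\,V$. The only mild subtlety is making sure the constants line up: the gadget has to produce an $\ell_1$ distance of exactly $2$ between the weight functions of bit-neighboring inputs (hence the choice of $0$ and $1/2$ rather than $0$ and $1$) so that the group privacy step matches the hypotheses of Lemma \ref{lemma:reidentification-Hamming} with the exponent $2\epsilon$; with those constants chosen correctly, the argument is a direct transcription of Theorem \ref{thm:MST-lower-bound}, and analogous statements for non-perfect matchings or maximum-weight matchings follow by relabeling weights.
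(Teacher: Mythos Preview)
Your reduction is correct and follows the paper's strategy exactly: build $\G$ as $n=V/4$ disjoint $K_{2,2}$ gadgets, encode a bit in the choice of perfect matching in each gadget, read the bit back from $\A$'s output, and invoke Lemma~\ref{lemma:reidentification-Hamming} with the $(2\epsilon,(1+e^\epsilon)\delta)$ parameters coming from group privacy over an $\ell_1$-distance-$2$ perturbation.

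The one place your encoding differs from the paper's is the weight assignment, and it causes a small mismatch with the theorem as stated. You put weight $1/2$ on both edges of the ``wrong'' matching so that four edges change by $1/2$ when a bit flips. This works, but it forces $w:\E\to\{0,1/2\}$, whereas the theorem asserts the hard instance has $w:\E\to\{0,1\}$. The paper avoids this by an asymmetric encoding: in gadget $i$ it sets weight $1$ on the \emph{single} edge $(0,1,i)$--$(1,1-x_i,i)$ and weight $0$ on the other three edges. Flipping $x_i$ then changes exactly two edges by $1$ each, so the $\ell_1$ distance is still $2$; the min-weight perfect matching is still $0$; and the wrong matching still has weight exactly $1$ (it contains the unique weight-$1$ edge). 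Your remark that one ``needs'' $1/2$ rather than $1$ to get $\ell_1$ distance $2$ is therefore not quite right: the paper's one-hot encoding achieves the same constants while staying in $\{0,1\}$. With that substitution your argument matches the paper's proof verbatim.
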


The following lemma reduces the problem of reidentification in a database to finding a low-cost matching.
Let $\G=(\V,\E)$ be the $4n$-vertex graph with vertex set $\V=\{(b_1,b_2,c):b_1,b_2\in\binset, c\in[n]\}$
and edges from $(0,b,c)$ to $(1,b',c)$ for every $b,b'\in\binset$, $c\in[n]$. That is,
$\G$ consists of $n$ disconnected hourglass-shaped gadgets as shown in Figure \ref{fig:other-problems}.

\begin{lemma}
Let $\G=(\V,\E)$ be the graph defined in the previous paragraph. For any $\alpha$, let 
$\A$ be an algorithm that is $(\epsilon,\delta)$-differentially private on $\G$ that on input
$w:\E\to\binset$ produces a perfect matching of expected weight at most $\alpha$ greater
than optimal. Then there exists a $(2\epsilon,(1+e^\epsilon)\delta)$-differentially private algorithm $\B$
which on input $x\in\binset^n$ produces $y\in\binset^n$ with expected Hamming distance to $x$ at most $\alpha$. 
\label{lemma:matching-reduction}
\end{lemma}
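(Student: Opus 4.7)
The plan is to mirror the structure of Lemmas \ref{lemma:SP-reduction} and \ref{lemma:MST-reduction}: define an injection $x \mapsto w_x$ from $\binset^n$ into weightings on $\G$ so that (i) neighboring $x,x' \in \binset^n$ yield weight functions at $\ell_1$ distance $2$, and (ii) from any low-weight perfect matching of $(\G, w_x)$ one can recover a vector $y \in \binset^n$ whose Hamming distance to $x$ is bounded by the excess weight. Composing with $\A$ and post-processing will then give the desired $\B$.

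For gadget $c \in [n]$, label the four edges $e_1^{(c)} = \{(0,0,c),(1,0,c)\}$, $e_2^{(c)} = \{(0,0,c),(1,1,c)\}$, $e_3^{(c)} = \{(0,1,c),(1,0,c)\}$, $e_4^{(c)} = \{(0,1,c),(1,1,c)\}$. The gadget has exactly two perfect matchings, $M_A^{(c)} = \{e_1^{(c)},e_4^{(c)}\}$ and $M_B^{(c)} = \{e_2^{(c)},e_3^{(c)}\}$. I define $w_x$ gadget-by-gadget as follows: if $x_c = 0$, set $w_x(e_1^{(c)}) = 0$ and $w_x(e_2^{(c)}) = w_x(e_3^{(c)}) = w_x(e_4^{(c)}) = 1$; if $x_c = 1$, set $w_x(e_3^{(c)}) = 0$ and $w_x(e_1^{(c)}) = w_x(e_2^{(c)}) = w_x(e_4^{(c)}) = 1$. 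Then in gadget $c$ the minimum perfect matching has weight $1$: it is $M_A^{(c)}$ when $x_c = 0$ and $M_B^{(c)}$ when $x_c = 1$, and the other matching has weight $2$. Overall, the min-cost perfect matching of $(\G,w_x)$ has weight exactly $n$. If $x,x' \in \binset^n$ differ only in coordinate $c$, then $w_x$ and $w_{x'}$ agree outside gadget $c$ and differ on $e_1^{(c)}$ and $e_3^{(c)}$ by exactly $1$ each, so $\|w_x - w_{x'}\|_1 = 2$.

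Algorithm $\B$ runs $\A$ on $(\G, w_x)$ to obtain a perfect matching $M$, and then sets $y_c = 0$ if $M \cap \text{gadget } c$ equals $M_A^{(c)}$ and $y_c = 1$ otherwise. For privacy, since neighboring $x,x'$ map to weightings at $\ell_1$ distance $2$, two applications of $(\epsilon,\delta)$-differential privacy yield
\[
\Pr[\A(w_x) \in S] \leq e^{2\epsilon}\Pr[\A(w_{x'}) \in S] + (1+e^\epsilon)\delta
\]
for any measurable $S$, and post-processing preserves this, so $\B$ is $(2\epsilon, (1+e^\epsilon)\delta)$-differentially private.

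For utility, observe that whenever $y_c \neq x_c$, the matching $M$ chose the heavier matching on gadget $c$, contributing an excess weight of exactly $1$ over the gadget's optimum; conversely, every gadget on which $M$ is suboptimal contributes excess weight exactly $1$ and corresponds to a flipped bit. Thus $w(M) - n = d_H(x,y)$ pointwise, and taking expectations gives $\EE[d_H(x,y)] = \EE[w(M)] - n \leq \alpha$ by the accuracy assumption on $\A$. The main subtlety, as noted above, is arranging the weights so that a single-bit change induces an $\ell_1$ change of exactly $2$ (rather than $4$) — this is what lets the reduction lose only a factor of $2$ in $\epsilon$, matching the statement of the lemma.
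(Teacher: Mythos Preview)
Your proof is correct and follows essentially the same approach as the paper: encode each bit of $x$ in the weights of one gadget so that neighboring inputs map to weight functions at $\ell_1$ distance~$2$, run $\A$, and decode the matching gadget-by-gadget. The only difference is the specific encoding: the paper sets a single edge per gadget to weight~$1$ (namely the edge from $(0,1,i)$ to $(1,1-x_i,i)$) and the remaining three to~$0$, so the optimal matching has total weight~$0$, whereas you set a single edge to~$0$ and the remaining three to~$1$, giving optimal weight~$n$. Both encodings achieve the same $\ell_1$ gap of~$2$ between neighbors and the same one-to-one correspondence between flipped bits and unit excess weight, so the arguments are interchangeable.
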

\begin{proof}
For any input $x\in \binset^n$, the corresponding weight function $w_x$ assigns weight $1$ to the edge
connecting vertex $(0,1,i)$ to $(1,1-x_i,i)$ for each $i\in[n]$, and assigns weight $0$ to the other $3n$ edges.
The algorithm $\B$ is as follows. On input $x\in\binset^n$, apply $\A$ to $(\G,w_x)$, and let $\M$ be the 
matching produced. Define $y\in\binset^n$ as follows. Let $y_i=0$ if the edge from $(0,1,i)$ to $(1,0,i)$
is in the matching, and $y_i=1$ otherwise. Output $y$. 

Algorithm $\B$ is clearly $(2\epsilon, (1+e^\epsilon)\delta)$-differentially private. We will have that $y_i\neq x_i$
only if the edge from   $(0,1,i)$ to $(1,1-x_i,i)$ is in the matching, so the expected Hamming distance is at most
the expected size of the matching produced by $\A$, which is at most $\alpha$.
\end{proof}

We now conclude the proof of Theorem \ref{thm:matching-lower-bound}.

\begin{proof}[Proof of Theorem \ref{thm:matching-lower-bound}]
The result follows by combining Lemma \ref{lemma:matching-reduction} with Lemma \ref{lemma:reidentification-Hamming}.
\end{proof}

Using the Laplace mechanism (Lemma \ref{lemma:LapMech}), we obtain a nearly-matching upper bound.

\begin{theorem}
For any $\epsilon,\gamma>0$ and $\G=(\V,\E)$ containing a perfect matching, 
there is an algorithm $\A$ that is $\epsilon$-differentially private
on $\G$ that on input $w:\E\to\mathbb{R}$ releases
with probability $1-\gamma$ a perfect
matching
of weight at most $(V/\epsilon)\log(E/\gamma)$ larger than optimal.
\end{theorem}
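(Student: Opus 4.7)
The plan is to mirror the Laplace-mechanism-plus-postprocessing approach used for the approximate shortest paths (Theorem~\ref{thm:SP-upper-bound}) and the almost-minimum spanning tree result in the previous subsection. Concretely, I would define $\A$ as follows: for each edge $e\in\E$, independently draw $X_e\sim\Lap(1/\epsilon)$ and set $w'(e)=w(e)+X_e$; then output any minimum weight perfect matching $\M$ of $(\G,w')$, which exists by assumption that $\G$ has a perfect matching.

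Privacy is immediate from Lemma~\ref{lemma:LapMech}: the map $w\mapsto w'$ is the Laplace mechanism applied to the identity function $f(w)=w$, which has $\ell_1$-sensitivity $\Delta f = 1$ in our model. Hence releasing $w'$ is $\epsilon$-differentially private, and $\M$ is a deterministic function of $w'$ (with ties broken independently of $w$), so by post-processing $\A$ is $\epsilon$-differentially private on $\G$.

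For utility, I would argue as in the MST proof. Each $X_e\sim\Lap(1/\epsilon)$ satisfies $\Pr[|X_e|>(1/\epsilon)\log(E/\gamma)]\leq \gamma/E$, so by a union bound over the $E$ edges, with probability at least $1-\gamma$ every $|X_e|\leq(1/\epsilon)\log(E/\gamma)$. Condition on this event. A perfect matching contains exactly $V/2$ edges, so for any perfect matching $\N$ we have $|w'(\N)-w(\N)|\leq (V/(2\epsilon))\log(E/\gamma)$. Letting $\M^*$ be a minimum weight perfect matching under $w$,
\begin{align*}
w(\M) &\leq w'(\M) + \tfrac{V}{2\epsilon}\log(E/\gamma) \\
&\leq w'(\M^*) + \tfrac{V}{2\epsilon}\log(E/\gamma) \\
&\leq w(\M^*) + \tfrac{V}{\epsilon}\log(E/\gamma),
\end{align*}
where the middle inequality uses optimality of $\M$ under $w'$.

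There is no real obstacle here: the argument is essentially a verbatim adaptation of the MST upper bound, with the only small difference being that a perfect matching has $V/2$ edges rather than $V-1$, which still yields the claimed $(V/\epsilon)\log(E/\gamma)$ bound after the factor of two incurred by going from $w$ to $w'$ and back. The only subtlety worth noting is that the bound is insensitive to whether we do bipartite or general matching, and to whether weights may be negative, since the Laplace noise and union bound arguments treat all edges symmetrically; the algorithm simply invokes any (non-private) min-weight perfect matching subroutine on the noisy graph $(\G,w')$ as a black box.
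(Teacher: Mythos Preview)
Your proposal is correct and is essentially identical to the paper's own proof: the paper also adds $\Lap(1/\epsilon)$ noise to each edge, outputs a minimum-weight perfect matching of $(\G,w')$, invokes post-processing of the Laplace mechanism for privacy, and uses the same union bound and the same three-line chain of inequalities (with the $V/2$ edge count) to obtain the $(V/\epsilon)\log(E/\gamma)$ error bound.
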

\begin{proof}
Consider the algorithm that adds noise $X_e$ distributed according to $\Lap(1/\epsilon)$ for each edge $e\in\E$
and releases the minimum-weight perfect matching on the resulting graph $(\G,w')$. This is 
$\epsilon$-differentially private, since it is post-processing of the Laplace mechanism. We now show that the resulting error is small.
With probability $1-\gamma$ we have that $|X_e|\leq (1/\epsilon)\log(E/\gamma)$ for
every $e\in \E$. Consequently, conditioning on this event, if $\M$ is the matching released by the algorithm
and $\M^*$ is the minimum-weight matching, then we have that
\begin{align*}
w(\M)& \leq w'(\M) \:\: + \frac{V}{2\epsilon} \cdot \log(E/\gamma) \\
&\leq w'(\M^*) + \frac{V}{2\epsilon} \cdot \log(E/\gamma)\\
&\leq w(\M^*) \; + \frac{V}{\epsilon} \cdot \log(E/\gamma).
\end{align*}
\end{proof}

\end{appendix}

\end{document}